\definecolor{mygray}{gray}{.9}
\DeclareMathOperator*{\argmax}{arg\,max}
\DeclareMathOperator*{\argmin}{arg\,min}
\newcommand{\says}[2]{\noindent\textcolor{purple}{\textbf{#1 says: }}\textcolor{blue}{#2}\xspace}
\newcommand{\mypara}[1]{\vspace*{0.05in}\noindent\textbf{#1.} \xspace}
\definecolor{revision}{RGB}{0,0,0}
\newcommand{\revision}[2]{{\color{revision} #2\xspace}\xspace}
\newcommand{\EV}[1]{\ensuremath{\mathbb{E}\left[\, #1 \,\right]}}
\newcommand{\Var}[1]{\ensuremath{\mathsf{Var}\left[#1\right]}\xspace}
\newcommand{\Bias}[1]{\ensuremath{\mathsf{Bias}\left[#1\right]}\xspace}
\renewcommand{\Pr}[1]{\ensuremath{\mathsf{Pr} \left[#1\right] }\xspace}
\renewcommand{\AA}{\mathbf{A}}
\newcommand{\myexp}[1]{\ensuremath{e^{#1}}\xspace}
\newcommand{\Lapp}[1]{\mathsf{Lap}\left(#1\right)\xspace}
\newcommand{\pak}{PAK\xspace}
\newcommand{\method}{ToPS\xspace}
\newcommand{\methodldp}{ToPL\xspace}
\newcommand{\sqrtee}{e^{\epsilon/2}}
\patchcmd{\hyper@makecurrent}{%
    \ifx\Hy@param\Hy@chapterstring
        \let\Hy@param\Hy@chapapp
    \fi
}{%
    \iftoggle{inappendix}{
        \@checkappendixparam{chapter}%
        \@checkappendixparam{section}%
        \@checkappendixparam{subsection}%
        \@checkappendixparam{subsubsection}%
        \@checkappendixparam{paragraph}%
        \@checkappendixparam{subparagraph}%
    }{}%
}{}{\errmessage{failed to patch}}
\newcommand*{\@checkappendixparam}[1]{%
    \def\@checkappendixparamtmp{#1}%
    \ifx\Hy@param\@checkappendixparamtmp
        \let\Hy@param\Hy@appendixstring
    \fi
}
\apptocmd{\appendix}{\toggletrue{inappendix}}{}{\errmessage{failed to patch}}
\begin{document}

\pagestyle{plain}
\fancyhead{}

\title{Continuous Release of Data Streams under both Centralized\\ and Local Differential Privacy}

\settopmatter{authorsperrow=4}

\author{Tianhao Wang}
\authornote{Tianhao did most of the work while at Purdue University.}
\affiliation{
{Carnegie Mellon University $\&$ University of Virginia}
\country{}
}

\author{Joann Qiongna Chen}
\affiliation{
\institution{University of California, Irvine}
\country{}
}

\author{Zhikun Zhang}
\affiliation{
\institution{CISPA}
\country{}
}

\author{Dong Su}
\affiliation{
\institution{Alibaba Inc.}
\country{}
}

\author{Yueqiang Cheng}
\affiliation{
\institution{NIO Security Research}
\country{}
}

\author{Zhou Li}
\affiliation{
\institution{University of California, Irvine}
\country{}
}

\author{Ninghui Li}
\affiliation{
\institution{Purdue University}
\country{}
}

\author{Somesh Jha}
\affiliation{
\institution{University of Wisconsin, Madison}
\country{}
}

\begin{abstract}
We study the problem of publishing a stream of real-valued data satisfying differential privacy (DP). One major challenge is that the maximal possible value in the stream can be quite large, leading to enormous DP noise and bad utility. To reduce the maximal value and noise, one way is to estimate a threshold so that values above it can be truncated. The intuition is that, in many scenarios, only a few values are large; thus truncation does not change the original data much. We develop such a method that finds a suitable threshold with DP. Given the threshold, we then propose an online hierarchical method and several post-processing techniques. 

Building on these ideas, we formalize the steps in a framework for the private publishing of streaming data. Our framework consists of three components: a threshold optimizer that privately estimates the threshold, a perturber that adds calibrated noise to the stream, and a smoother that improves the result using post-processing. Within our framework, we also design an algorithm satisfying the more stringent DP setting called local DP. Using four real-world datasets, we demonstrate that our mechanism outperforms the state-of-the-art by a factor of $6-10$ orders of magnitude in terms of utility (measured by the mean squared error of the typical scenario of answering a random range query). 
\end{abstract}

\begin{CCSXML}
<ccs2012>
<concept>
<concept_id>10002951.10002952.10002953.10010820.10003208</concept_id>
<concept_desc>Information systems~Data streams</concept_desc>
<concept_significance>500</concept_significance>
</concept>
<concept>
<concept_id>10002978.10002991.10002995</concept_id>
<concept_desc>Security and privacy~Privacy-preserving protocols</concept_desc>
<concept_significance>500</concept_significance>
</concept>
</ccs2012>
\end{CCSXML}

\ccsdesc[500]{Information systems~Data streams}
\ccsdesc[500]{Security and privacy~Privacy-preserving protocols}

\keywords{Differential Privacy; Local Differential Privacy; Continuous Observation; Data Stream}

\maketitle

\section{Introduction}
Continuous observation over data streams has been utilized in several real-world applications. 
For example, security companies continuously analyze network traffic to detect abnormal Internet behaviors~\cite{csur:chandola2009anomaly}.
However, analyzing and releasing streams raise privacy concerns when these data contain sensitive individual information.
Directly publishing raw statistics may reveal individual users' private information.
For instance, electricity usage data from smart meters can reveal whether a user is at home or even what household appliances are used at some specific time~\cite{meter-memoir}.


A promising technique for releasing private statistics is differential privacy (DP)~\cite{tcc:DworkMNS06}, which has become the gold standard in the privacy-research community. 
Informally, any algorithm satisfying DP has the property that its output distribution on a given database is close to the output distribution on a similar dataset where any single record is replaced.  The closeness is quantified by a parameter $\epsilon$, where a smaller $\epsilon$ offers a better privacy guarantee.

To publish streams with DP, a widely accepted approach is to use the hierarchical structure ~\cite{tissec:chan2011private,stoc:dwork2010differential}.
The idea is to partition the time series into multiple granularities and then add noise to the stream to satisfy DP.  Because of the additive noise, it is impossible to accurately publish any single value in the stream.  Thus \revision{the goal here}{our goal} is to {\it accurately estimate the sum of values over any range of time}.
One challenge is that to satisfy DP, the magnitude of the noise should be proportional to the upper bound of the data, which is typically large.
Perrier et al.~\cite{ndss:perrier2019private} (termed \pak in this paper) observed that data in the stream is often concentrated below a value much smaller than the upper bound.  
To exploit this insight, a technique called contribution limitation is commonly-used~\cite{pvldb:ZNC12,pvldb:li2012privbasis,sigmod:day2016publishing,popets:wilson20differentially,pvldb:kotsogiannis2019privatesql}. It truncates the data using a specified threshold $\theta$ (i.e., values larger than $\theta$ are replaced by $\theta$).
The rationale is to reduce the noise (now the noise is proportional only to $\theta$) while preserving utility.
To find such a threshold while maintaining DP, \pak developed a method based on smooth sensitivity~\cite{stoc:nissim07}.  
The result can then be applied to the hierarchical algorithm to publish streams with improved utility.

We find three key limitations in existing work of \pak's.  First, it tries to privately find the $99.5$-th percentile to serve as the threshold $\theta$.  Unfortunately, using the $99.5$-th percentile (or any other fixed percentile) is unlikely to work across all settings of $\epsilon$ values and data distributions. 
Second, in order to get an analytical upper bound of the error caused by truncation (this error is also called bias), the authors further increase the estimated $99.5$-th percentile by first adding a positive term, and then multiplying a coefficient greater than 1.  As a result, the chosen $\theta$ is often unnecessarily large.  When $\epsilon$ is small (e.g., $\epsilon \le 0.1$), the value of $\theta$ is usually larger than the maximal possible value, running against the original purpose of choosing the threshold.  
Third, the method directly utilizes a basic hierarchical approach to output the stream, and does not fully take advantage of post-processing optimizations.  As a consequence, the accuracy of the output is far from ideal, and the results are worse when answering range queries with small selectivity.

In this paper, we propose a new approach by addressing the above-mentioned three limitations. 
Instead of using a fixed percentile, we design a data-dependent method to find the threshold $\theta$ that considers the overall data distribution.  Our goal is to minimize the overall error due to bias and DP noise simultaneously. 
Given $\theta$, we then propose a new hierarchical algorithm to obtain accurate results.
One major contribution is a novel online algorithm to enforce consistency over the noisy estimates \revision{}{(i.e., to make sure the number on any node equal the sum of its children's, which is violated if independently sampled noise is added to the hierarchy)} on the hierarchy to provide better utility.  While there exists consistency methods that work on the noisy hierarchies, our observation is that we can pre-compute all the noise, and then make the noise consistent first.  As the true values are naturally consistent, we can then add the consistent noise to the true values in an online manner and thus \revision{achieves}{achieve} an online consistency algorithm.
We prove the algorithm achieves minimum squared error and also satisfies DP.
Another contribution is that \revision{, we observe that the estimates in the lower levels of the hierarchy tend to be overwhelmed by the noise, leading to a low signal-noise ratio.  Thus, we further extend the algorithm to prune the lower-level nodes based on an optimization criteria.}{
we further extend the algorithm to prune the lower-level nodes based on an optimization criterion, based on the observation that the estimates in the lower levels of the hierarchy tend to be overwhelmed by the noise, leading to a low signal-noise ratio.} 
Our new hierarchical algorithm is also able to handle infinite streams.

Next, we generalize the above-mentioned algorithms into a new framework for streaming data publication.  It consists of three components: a {\it Threshold optimizer}, a {\it Perturber}, and a {\it Smoother}.  The threshold optimizer consumes a portion of the input stream, and finds a threshold $\theta$.  It then truncates \revision{any incoming value}{all incoming values} by $\theta$ and sends them to the perturber.  The perturber adds noise to each incoming element of the stream, and releases noisy counts to the smoother.  Finally, the smoother performs further post-processing on the noisy counts and outputs the final stream.  Together with the new algorithms described above, we call our solution {\it \method}.

Finally, based on the framework of \method, we design an algorithm to output streams while satisfying local DP (LDP), which \revision{offers stronger privacy protection}{protects privacy under a stronger adversary model} than DP.  We call the resulting method \methodldp.  Under LDP, only the users know the true values and thus \revision{removing}{removes} the dependence on the trusted central server.
In \methodldp, we use state-of-the-art LDP mechanisms for the Threshold optimizer and the Perturber.  
While the design of \methodldp relies on the findings in \method,  
we also adapt existing LDP mechanisms to our setting to get better performance.

We implemented both \method and \methodldp, and evaluated them using four real-world datasets, including anonymized DNS queries, taxi trip records, click streams, and merchant transactions.
We use the Mean Squared Error (MSE) over random range queries as the metric of performance evaluation.  The experimental results demonstrate that our \method significantly outperforms the previous state-of-the-art algorithms.
More specifically, the most significant improvement comes from our new technique to finding $\theta$.  It contributes an improvement of $4-8$ orders of magnitude over \pak.
Even given the same reasonable $\theta$, \method can answer range queries $100\times$ more accurately than \pak.  Putting the two together, \method improves over \pak by $6-10$ orders of magnitude in terms of MSE.

\mypara{Contributions}
To summarize, the main contributions of this paper are threefold:
\begin{itemize}[leftmargin=*]

    \item We design \method for releasing real-time data streams under differential privacy.  Its contributions include an EM-based algorithm to find the threshold, an \revision{on-line}{online} consistency algorithm, the use of a smoother to reduce the noise, and the ability to handle infinite streams.
    \item 
    We extend \method to solve the problem in the more stringent setting of LDP and propose a new algorithm called \methodldp.  
    \item We evaluate \method and \methodldp using several real-world datasets.  The experimental results indicate that both can output streams \revision{pretty}{} accurately in their settings\revision{, respectively}{}.  Moreover, \method outperforms the previous state-of-the-art algorithms by a factor of $6-10$ orders.  Our code is open sourced at \url{https://github.com/dp-cont/dp-cont}.
\end{itemize}

\mypara{Roadmap}
In \autoref{sec:back}, we present the problem definition and the background of DP and LDP.
We present the existing solutions and our proposed method in \autoref{sec:method} and \ref{sec:method_ldp}.
Experimental results are presented in~\autoref{sec:exp}.
Finally, we discuss related work in Section~\autoref{sec:related} and provide concluding remarks in Section~\autoref{sec:conc}.
\section{Problem Definition and Preliminaries}
\label{sec:back}

We consider the setting of publishing a stream of real values under differential privacy (DP).  
The length of the stream could be unbounded.
Due to the constraint of DP, it is unrealistic to make sure every single reading of the stream \revision{}{is} accurate\revision{.  
So}{, so} the goal is to ensure the aggregated estimates \revision{}{are} accurate.

\subsection{Formal Problem Definition}
\label{subsec:formal_problem_def}
There is a sequence of readings $V=\langle v_1, v_2, \ldots\rangle$, each being a real number in the range of $[0, B]$.  We publish a private sequence $\tilde{V}$ of the same size as $V$ while satisfying DP, with the goal of accurately answering range queries.  Range query is an important tool for understanding the overall trend of the stream.  Specifically, a range query $V(i, j)$ is defined as the sum of the stream from index $i$ to $j$, i.e., $V(i, j) = \sum_{k=i}^j v_k$.
We want a mechanism that achieves a low expected squared error of any randomly sampled range queries, i.e., 





\begin{align}
    \EV{\left(\tilde{V}(i, j) - V(i, j)\right)^2}.
    \label{eq:ana_metric}
\end{align}

\subsection{Differential Privacy}


We follow the setting of PAK~\cite{ndss:perrier2019private} and adopt the notion of \emph{event-level} DP~\cite{stoc:dwork2010differential}, which protects the privacy of any value in the stream.

\begin{definition}
\label{def:dp}
(Event-level $(\epsilon, \delta)$-DP)
An algorithm $\AA(\cdot)$ satisfies $(\epsilon, \delta)$-differential privacy ($(\epsilon, \delta)$-DP),
if and only if for any two neighboring sequences $V$ and $V'$ and for any possible output set $O$,
\begin{equation*}
 \Pr{\AA(V) \in O} \leq e^{\epsilon}\, \Pr{\AA(V') \in O} + \delta,
\end{equation*}
where two sequences $V = \langle v_1, v_2, \ldots  \rangle$ and $V' = \langle v'_1, v'_2, \ldots \rangle$ are neighbors, denoted by $V\simeq V'$, when $v_i=v'_i$ for all $i$ except one index.
\end{definition}
For brevity, we use $(\epsilon, \delta)$-DP to denote Definition~\ref{def:dp}.  
When $\delta=0$, which is the case we consider in this paper, we omit the $\delta$ part and write $\epsilon$-DP instead of $(\epsilon, 0)$-DP.



\mypara{Justification of Event-Level DP}
Although event-level DP only protects one value, it is a suitable guarantee in many cases.  For example, individuals might be happy to disclose their routine trip to work while unwilling to share the occasional detour. 
Note that the data model is general and $V$ can also come from multiple users.  
For example, $V$ consists of the customers' expenditure from a grocery store, and we want to protect some unusual transaction.
Moreover, our model is a generalization of the basic model where every value is binary~\cite{focs:dwork2010boosting,tissec:chan2011private}, and can be used in building private algorithms with trusted hardware~\cite{soda:chan2019foundations}.

\mypara{\revision{}{Extension to Event-Level LDP}}
We also work in the local version of DP~\cite{siamcomp:kasiviswanathan2011can}.  Compared to the centralized setting, local DP offers a stronger \revision{level of protection}{trust model}, because each value is reported to the server in a perturbed form. \revision{The user's privacy}{Privacy} is protected even if the server is malicious.  For each value $v$ in the stream of $V$, we have the following guarantee:
\begin{definition}[$(\epsilon, \delta)$-LDP] \label{def:ldp}
	An algorithm $\AA(\cdot)$ satisfies $(\epsilon, \delta)$-local differential privacy ($(\epsilon, \delta)$-LDP),
	if and only if for any pair of input values $v, v'$, and any set O of possible outputs of $\AA$, we have
    \begin{equation*}
\Pr{\AA(v)\in O} \leq e^{\epsilon}\, \Pr{\AA(v')\in O} + \delta.
	\end{equation*}
\end{definition}
Typically, $\delta = 0$ in LDP\revision{}{~\cite{sp:WangLJ18,icde:WangXYZ19,sigmod:li2019estimating,jasa:Warner65} (one reason is that many LDP protocols are built on randomized response~\cite{jasa:Warner65}, which ensures $\delta=0$}).  
Thus we simplify the notation and call it $\epsilon$-LDP.
The notion of LDP differs from DP in that each user perturbs the data before sending it out and thus do not need to trust the server under LDP.  

\subsection{Mechanisms of Differential Privacy}
\label{subsec:dp_primitive}

We first review primitives proposed for satisfying DP.  We defer the descriptions of LDP primitives to~\autoref{app:ldp_primitive} as our LDP method mostly uses the LDP primitives as blackboxes.

\mypara{Laplace Mechanism}
The Laplace mechanism computes a function $f$ on the input $V$ in a differentially private way, by adding to $f(V)$ a random noise.  The magnitude of the noise depends on $\mathsf{GS}_f$, the \emph{global sensitivity} or the $L_1$ sensitivity of $f$, defined as,
\[
\mathsf{GS}_f = \max\limits_{V\simeq V'} ||f(V) - f(V')||_1.
\] 
When $f$ outputs a single element, such a mechanism $\AA$ is given below:
$$
\begin{array}{crl}
& \AA_f(V) & =f(V) + \Lapp{\frac{\mathsf{GS}_f}{\epsilon}}.
\end{array}
$$
In the definition above, $\Lapp{\beta}$ denotes a random variable sampled from the Laplace distribution with scale parameter $\beta$ such that $\Pr{\Lapp{\beta}=x} = \frac{1}{2\beta} \myexp{-|x|/\beta}$, and it has a variance of $2\beta^2$.  When $f$ outputs a vector, $\AA$ adds independent samples of $\Lapp{\frac{\mathsf{GS}_f}{\epsilon}}$ to each element of the vector.


\mypara{Noisy Max Mechanism}
The Noisy Max mechanism (NM)~\cite{dpbook} takes a collection of queries, computes a noisy answer to each query, and returns the index of the query with the largest noisy answer.

More specifically, given a list of queries $q_1, q_2, \ldots$, where each $q_i$ takes the data $V$ as input\revision{,}{} and outputs a real-numbered result, the mechanism computes $q_i(V)$, samples a fresh Laplace noise $\Lapp{\frac{2\mathsf{GS}_q}{\epsilon}}$ and add\revision{}{s} it to the query result, i.e., 
\begin{align}\label{eq:nm_lap}
    \tilde{q}_i(V) = q_i(V) + \Lapp{\frac{2\mathsf{GS}_q}{\epsilon}},
\end{align} 
and returns the index 
    $j = \argmax_i \tilde{q}_i(V)$.
Here $\mathsf{GS}_q$ is the global sensitivity of queries and is defined as:
\begin{align}
    \mathsf{GS}_q = \max_{i} \max_{V\simeq V'} |q_i(V) - q_i(V')|.
    \nonumber
\end{align} 

Dwork and Roth prove this satisfies $\epsilon$-DP~\cite{dpbook} (recently Ding et al.~\cite{arXiv:ding2021permute} proved using exponential noise also satisfy DP).  Moreover, if the queries satisfy the monotonic condition, meaning that when the input dataset is changed from $V$ to $V'$, the query results change in the same direction, i.e., for any neighboring $V$ and $V'$
\revision{
\[\left(\exists_o\, q(V,o) \!<\! q(V',o)\right) \!\implies\! \left(\forall_{o'}\,q(V,o') \!\leq \! q(V',o')\right)\]
}{
\[\left(\exists_i\, q_i(V) \!<\! q_i(V')\right) \!\implies\! \left(\forall_{i'}\,q_{i'}(V) \!\leq \! q_{i'}(V')\right).\]}

Then one can remove the factor of $2$ in the Laplace noise.  This improves the accuracy of the result.

\subsection{Composition Properties}
\label{subsec:composition}
The following composition properties hold for both DP and LDP algorithms, each commonly used for building complex differentially private algorithms from simpler subroutines. 

\mypara{Sequential Composition}
Combining multiple subroutines that satisfy DP for $\epsilon_1, \cdots,\epsilon_k$ results in a mechanism that satisfies $\epsilon$-DP for $\epsilon=\sum_{i} \epsilon_i$.

\mypara{Parallel Composition}
Given $k$ algorithms working on disjoint subsets of the dataset, each satisfying DP for $\epsilon_1, \cdots,\epsilon_k$, the result satisfies $\epsilon$-DP for $\epsilon=\max_{i} \epsilon_i$. 

\mypara{Post-processing}
Given an $\epsilon$-DP algorithm $\AA$, releasing $g(\AA(V))$ for any $g$ still satisfies $\epsilon$-DP.  That is, post-processing an output of a differentially private algorithm does not incur any additional loss of privacy.


\section{Differentially Private Streams}
\label{sec:method}
For privately releasing streams and supporting range queries over the private stream, the most straightforward way is to add independent noise generated through the Laplace distribution. However, this results in a cumulative error (following the tradition, we use absolute error here, which measures the difference from the true sum) of $O(\sqrt{n})$ after $n$ observations.

\mypara{The Hierarchy Approach}
To get rid of the dependency on $\sqrt{n}$, the hierarchical method was proposed~\cite{tissec:chan2011private,stoc:dwork2010differential}.  Given a stream of length $n$, the algorithm first constructs a tree: the leaves are labeled $\{1\}, \{2\}, \ldots, \{n\}$ and the label of each parent node is the union of labels from its child nodes.  Given $h=\log n$ layers, the method adds Laplace noise with $\epsilon/h$ in each layer.  
To obtain the noisy count $\tilde{V}(i,j)$, we find at most $\log n$ nodes in the hierarchy, whose labels are disjoint and their union equals $[i, j]$.  Given that the noise added to each node is $O(\log{n})$, this method has an error of $O(\log^{1.5}{n})$. 

In the online setting, where the stream data come one-by-one, we want to release every node in the hierarchy promptly.
To do so, at any time index $t$, we publish all nodes that contain $t$ as the largest number in their labels. 

\subsection{Existing Work: \pak}
\label{subsec:existing}
To satisfy DP in the hierarchy method, one needs to add noise proportional to $B$, the maximal possible value in the stream, and $B$ can be quite large in many cases.  
Perrier et al.~\cite{ndss:perrier2019private} (we call it by the authors' initials, \pak, for short) observed that in practice, most of the values are concentrated below a threshold much smaller than $B$ (e.g., the largest possible purchase price of supermarket transactions is much larger than what an ordinary customer usually spends), and proposed a method to find such a threshold and truncate data points below it to reduce the scale of the injected Laplace noises. 
In particular, the first $m$ values \revision{}{are used} to estimate the threshold $\theta$ with differential privacy.
After obtaining $\theta$, the following values in the stream are truncated to be no larger than $\theta$.  Reducing the upper bound from $B$ to $\theta$ reduces the DP noise (via reducing sensitivity).
The hierarchical method is used for estimating the stream statistics with the remaining $n-m$ values (\pak assumes there are $n$ observations).  




\mypara{Finding the Threshold}
To obtain $\theta$, \pak proposed a specially designed algorithm based on Smooth Sensitivity (SS)~\cite{stoc:nissim07} to get the $p$-quantile (or $p$-percentile) as $\theta$, i.e., $p\%$ of the values are smaller than $\theta$.  
%
%
SS was used to compute the median with DP in the original work~\cite{stoc:nissim07}, and SS can also be easily extended to privately release the $p$-quantile. 
\pak proved that the result of SS is unbiased, but they further wanted to make sure the result is always larger than the real $p$-quantile.  This is because if the estimated percentile is smaller than the real one, the truncation in the next phase will introduce greater bias.  Thus \pak modified the original SS method to guarantee that the result is unlikely to be smaller than the real $p$-quantile.  As the details of the method are not directly used in the rest of the paper, we defer the details of both SS and the algorithm itself to \autoref{app:dp_primitive} and~\ref{app:imp_ss}.

There are two drawbacks of this method.  First, it requires a $p$ value to be available beforehand.  But a good choice of $p$ actually depends on the dataset, $\epsilon$, and $m$.  \pak simply uses $p=99.5$.  As shown in our experiment in \autoref{sec:exp_threshold}, $p=99.5$ does not perform well in every scenario.
Second, to ensure that $\theta$ is no smaller than the real $p$-quantile, \pak introduces a positive bias to $\theta$.

\subsection{Overview of Our Approach}
\label{sec:method_overview}


The design of \pak was guided by asymptotic analysis.  Unfortunately, for the parameters that are likely to occur in practice, the methods and parameters chosen by asymptotic analysis can be far from optimal, as such analysis ignores important constant factors.  \revision{We advocate an approach that uses}{Instead, we use} concrete analysis to \revision{better}{} guide the choice of methods and parameters.


In this section, we first deal with the threshold selection problem using the Noisy Max mechanism (NM, introduced in \autoref{subsec:dp_primitive}), which satisfies DP with $\delta = 0$.  
Empirical experiments show its superiority especially in small $\epsilon$ scenarios (which means compared to \pak, we can achieve the same performance with better privacy guarantees).  We then introduce multiple improvements for the hierarchical methods including an online consistency method, and a method to reduce the noise in the lower levels of the hierarchy.
We integrate all the components in a general framework\revision{}{,} {\it ToPS}, which consists of {\it a Threshold optimizer, a Perturber, and a Smoother}:
\begin{itemize}[leftmargin=*]
    \item {\it Threshold optimizer}: The threshold optimizer uses a small portion of the input stream to find a threshold $\theta$ for optimizing the errors due to noise and bias.  It then truncates any incoming values by $\theta$ and releases them to the perturber.  
    \item {\it Perturber}: The perturber adds noise to the truncated stream, and releases noisy counts to the smoother.
    \item {\it Smoother}: The smoother performs further post-processing on the noisy counts and outputs the final stream.
\end{itemize}

While design of \method is inspired by \pak, we include unique design choices to handle the problem.  In particular, \pak has two phases, the threshold finder and the hierarchical method.  We improve both phases.  Note that as we use NM~\cite{dpbook}, our algorithm satisfies $\epsilon$-DP while \pak satisfies $(\epsilon, \delta)$-DP.  Moreover, we introduce a smoother that further improves accuracy.
The fundamental reason that our method works better is that we design our method focusing on a good empirical performance rather than theoretical bounds.

\subsection{Threshold Optimizer}
\label{sec:threshold}

Different from \pak~\cite{ndss:perrier2019private} that focuses on bias when choosing the threshold $\theta$, our approach is to consider both bias and variance (due to DP noise).  As bias and variance go in opposite ways (i.e., when $\theta$ is large, bias will be small, but variance will go large, vice versa), there will be an optimal $\theta$ that minimizes the overall error.  Note that the overall error depends on the whole distribution of the data, which might be too much information to accurately estimate with DP.  To handle this issue, we choose to use the Noisy Max mechanism (NM)~\cite{dpbook}, which looks into the data privately and outputs only a succinct information of $\theta$.
In what follows, we first examine the error.  

\mypara{\revision{Modeling the Error}{A Basic NM Query Definition}}
We first consider the expected squared error of estimating a single value $v$.  Assuming that $\tilde{v}$ is the estimation of $v$, it is well known that the expected squared error is the summation of variance and the squared bias of $\tilde{v}$: 
\begin{align}
\EV{(\tilde{v}-v)^2} = \Var{\tilde{v}} + \Bias{\tilde{v}}^2.\label{lemma:variance_bias}
\end{align}
Note that $\Bias{\tilde{v}}$ equals $\EV{\tilde{v}} - v$.  
Given a threshold $\theta$ and privacy budget $\epsilon$, $\Bias{\tilde{v}}=\max(v-\theta,0)$ and $\Var{\tilde{v}}=\frac{2\theta^2}{\epsilon^2}$ (because we add Laplace noise with parameter $\theta/\epsilon$).

Since we are using the hierarchical method to publish streams for answering range queries, we use the error estimations of the hierarchical method to instantiate \autoref{lemma:variance_bias}.  Qardaji et al.~\cite{pvldb:QardajiYL13} show that there are approximately
$(b-1)\log_b(r)$ nodes to be estimated given any random query of range smaller than $r$, where $b$ is the fan-out factor of the hierarchical method; and the variance of each node is $\log^2_b(r)\frac{2\theta^2}{\epsilon^2}$.  
For bias, within range limitation $r$, a random query will cover around $\frac{r}{3}$ leaf nodes on average~\cite{pvldb:QardajiYL13} (We assume a random query can take any range in $[1,r]$. Thus there are $r(r+1)/2$ possible range queries. Among them, for any range of length $j\in[1,r]$, there are $r-j+1$ such ranges. The expected length of a random query is $\frac{\sum_{j=1}^{r}j(r-j+1)}{r(r+1)/2}=\frac{r+2}{3}\approx\frac{r}{3}$).
Denote $f_t$ as the frequency of value $t$, the combined error of the hierarchical method for answering random range queries would be: 
\begin{align}
    (b-1)\log^3_b(r)\frac{2\theta^2}{\epsilon^2} + \left(\frac{r}{3} \sum_{\theta<t<B}f_t(t-\theta)\right)^2.\label{eq:range_error}
\end{align}

\mypara{\revision{Approximate the Error}{The Final NM Query Definition}}
For NM to be effective, the queries should have low sensitivity, meaning that changing one value perturbs the queries by a tiny amount.  However, if we directly use \autoref{eq:range_error} as the queries, the sensitivity is large: a change of value from $0$ to $B$ will result in the increase of \autoref{eq:range_error} by $(B-\theta)^2/9$.  Thus we choose to approximate the mean squared error by defining the queries in the following ways.  

Denote $m$ as the number of values to be used in NM, and $m_\theta$ as the number of values that are greater than $\theta$ from these $m$ values: 
\begin{align}
    m_\theta=|\{i\mid v_i \ge \theta, i\in[m] \}|,
    \label{eq:m}
\end{align}
where $[x]=\{1,2,\ldots,x\}$ and $|X|$ denotes the cardinality of set $X$.  
The first approximation method we use is to replace the variance and squared bias with their squared roots (standard deviation and bias).
Second, we use $c\cdot m_\theta/m$ ($c$ is a constant to be discussed later) to approximate $\sum_{\theta<t<B}f_t(t-\theta)$.
Third, we multiply both the standard deviation and bias errors by $-\frac{3m}{c\cdot r}$ to ensure the sensitivity is $1$, and the query result of the target is the highest.
Thus we have:
\begin{align}
    q_\theta(V) =& -\frac{3m}{c\cdot r}\sqrt{(b-1)\log^3_b(r)\frac{2\theta^2}{\epsilon^2}} - m_\theta\nonumber\\
    =& -\frac{3m \theta}{c r\epsilon}\sqrt{2(b-1)\log_b^3(r)} - m_\theta.\label{eq:eme_q}
\end{align}
The first term is a constant depending on $\theta$ but independent of the private data, while the second term has a sensitivity of $1$.  

\mypara{Running NM}
To run NM, the set of possible $\theta$ values considered in the queries $q_\theta(V)$ should be a discrete set that covers the range $[0,B]$.  The granularity of the set is important.  If it is too coarse-grained (e.g., $\theta\in\{0, B/2, B\}$), the method is inaccurate, because the desired value might be far from any possible output.  On the other hand, if it is too fine-grained, the NM algorithm will run slowly, but it does not influence the accuracy.
In the experiment, we use all integers in the range of $[B]=\{1,2,\ldots, B\}$ as the possible set of $\theta$.

One unexplained parameter in \autoref{eq:eme_q} is $c$.  
There are two factor that contributes to $c$: (1) Using $m_\theta/m$ to approximate the bias term $\sum_{\theta<t<B}f_t(t-\theta)$ leads to underestimation. 
(2) As we will describe later in \autoref{sec:perturber} and~\ref{sec:smoother}, the actual squared error will be further reduced by our newly proposed method.  
While $c$ intends to be a rough estimation of the underestimation, it does not need to be chosen based on one particular dataset.
One can run experiments with a public dataset of similar nature under different parameters, the best level of error that can be achieved is usually a good indicator of $c$.  When a public dataset is unavailable, one can generate a synthetic dataset under some correlation assumption and run experiments. In experiments conducted for this paper, we choose $c=60$, and use it for all datasets and settings.

\begin{figure}[t]
    \centering
    \includegraphics[width = 0.46\textwidth]{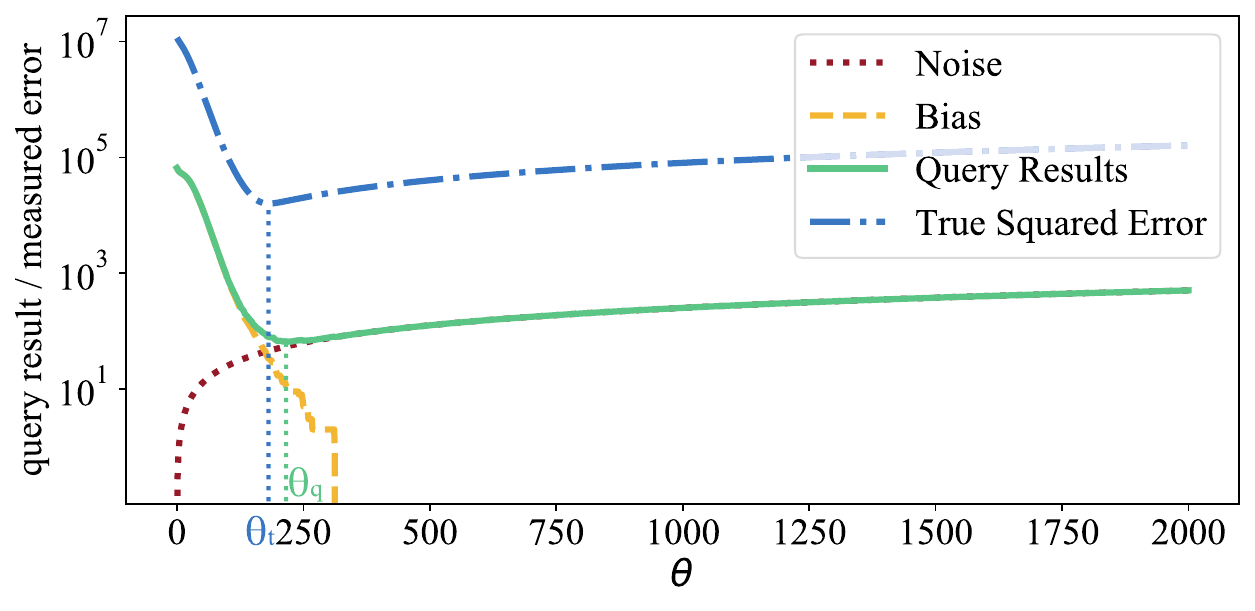}
    \caption{Empirical comparison of approximated query results (\autoref{eq:eme_q}) and the true squared errors and their minimum points $\theta_{q}$ and $\theta_{t}$ on a real-world datasets (DNS).  We use $\epsilon=0.1, m=2^{16}, r=2^{20}$.  The $x$-axis is the possible value of $\theta$, and the $y$-axis is the query result or the measured error.
    }
    \label{fig:eme}
\end{figure}

\mypara{Verify the Approximation}
\autoref{fig:eme} illustrates the distribution of \autoref{eq:eme_q} and measured errors on a dataset that is used in the experiments in \autoref{sec:exp}.  The dataset is a network streaming dataset, called DNS.
We use $\epsilon=0.1, m=2^{16}, r=2^{20}$, which are the same as those parameters used in experiments.  From \autoref{fig:eme}, we can see that the distributions between the truly measured errors (\autoref{eq:range_error}) and the corresponding \autoref{eq:eme_q} on two datasets are very close.  The figure also illustrates the bias and variance factors of \autoref{eq:eme_q}.  The two factors grow in opposite directions which makes a global minimum where the target $\theta$ lies.  In addition, we also show that the threshold $\theta_q$ that minimizes our queries \autoref{eq:eme_q} is close to the target threshold $\theta_t$ which minimizes the real measured errors (\autoref{eq:range_error}).  Therefore, the above empirical evaluation results show the capability of the threshold optimizer in finding accurate $\theta$ values.

\subsection{Perturber}
\label{sec:perturber}

The perturber inherits the hierarchical idea~\cite{tissec:chan2011private,stoc:dwork2010differential} (also described in \autoref{subsec:existing}).  In this section, we start from the binary hierarchy used in \pak and put together three improvements to it to obtain a solution that is practical across a wide range of datasets.

\mypara{1. Better Fan-out}
According to Qardaji et al.~\cite{pvldb:QardajiYL13}, using a fan-out $b=16$ instead of $2$ in the hierarchy can give better utility.  
The result of optimal fan-out $b=16$ is derived by analyzing the accuracy (variance) of answering range queries.  In particular, we assume the range query is random and all layers in the hierarchy receive the same amount of privacy budget.  We then measure the expected accuracy (measured by variance) of answering the range query.  The optimal value $b=16$ is obtained by minimizing the variance.  It does not change on different datasets because the analysis is data-independent.  We thus use fan-out $b=16$ by default.

\mypara{2. Handling Infinite Streams}
\pak requires a fixed length $n$ a priori in order to build the hierarchy.  As a result, their algorithm stops after $n$ observations.
In order to support infinite streams, Chan et al.~\cite{tissec:chan2011private} proposed to have an infinitely high hierarchy, and each layer receives a privacy budget inversely proportional to the height of the layer.  For example, the bottom layer receives $0.9\epsilon$, then its parent layer receives $0.9^2\epsilon$, and so on.  While this ensures the overall privacy budget will never exceed $\epsilon$, the higher layers are essentially receiving a tiny amount of privacy budget.

In this paper, we note that most of the queries focus on limited ranges, and propose to have an upper bound on the query range denoted by $r$ and then split $\epsilon$ equally to the $h=\lceil \log_b r\rceil$ layers.  The value of $r$ stands for a limit below which most queries' ranges fall, and is determined externally (e.g., if we receive one value per minute, then it is unlikely that a query spans over a year).
For each chunk of $r$ observations, we output a height-$h$ hierarchy. 
We note that each hierarchy handles a disjoint sub-stream of observations and thus this extension does not consume additional privacy budget because of the parallel composition property of differential privacy.  In the evaluation, we choose $r=2^{20}$.

\mypara{3. Online Consistency}
Given a noisy hierarchy, Hay et al.~\cite{pvldb:HayRMS10} proposed an efficient algorithm for enforcing consistency among the values in it.  By enforcing consistency, the accuracy of the hierarchy can be improved.  Note that this is a post-processing step and does not consume any privacy budget.
Unfortunately, the algorithm is {\it off-line} and requires the whole hierarchy data to be available.  Here we propose an {\it online} version of the enforce-consistency algorithm, so that we can output the noisy streams promptly.

Our method is built on the work of Hay et al.~\cite{pvldb:HayRMS10}.  
Due to space limitation, we provide details of the algorithm in \autoref{app:consist}.  \revision{But the}{The} intuition is that, if we have two estimations of the same value, their (weighted) average would be closer to the true value.  

\revision{Known}{Knowing} that the noisy estimates can be decomposed into true values and pure noise, our method generates all required noise in advance, followed by the consistency enforcement. In this way, the consistent off-line noise can be directly added to the incoming true values during online publishing. Because of the consistency of both the true values and the noise, the noisy estimates will also be consistent.
Moreover, we prove that the result of our online algorithm is equivalent to that of the off-line algorithm (the proof is deferred \autoref{app:consist}).
\begin{theorem}
\label{thm:online_consist}
The online consistency algorithm gives identical results as the off-line consistency algorithm.
\end{theorem}
This together with the fact that the off-line consistency algorithm can be seen as post-processing and thus satisfies DP, we can argue that our online algorithm also satisfies DP.

\subsection{Smoother}
  \label{sec:smoother}
In this section, we introduce a smoother to further improve the utility of the algorithm.  Assuming the hierarchy from perturber has $h$ layers, the smoother is designed to replace the values from the $s$ lowest layers with predictions (based on previous estimations).
The first question is how to choose $s$.


\mypara{Optimizing $s$}
Selecting $s$ is important.  A larger $s$ results in smaller noise errors: because there are now $h-s$ layers in the hierarchy, each layer will receive more privacy budget according to sequential composition (given in \autoref{subsec:composition}).  
On the other hand, a larger $s$ probably leads to a larger bias (because we are only doing the actual estimate once every $b^s$ values; other estimates are from predictions based on previous values, thus are independent of the true values and less accurate).  Choosing a good value of $s$ thus is a balance between noise errors and bias.  Note that we already have noise error term from \autoref{eq:range_error}, but we need to calculate the bias introduced by the smoother (the truncation bias in \autoref{eq:range_error} already exists and does not change with $s$).

To estimate the smoothing bias, we assume that for each value, the bias amount is approximately $\theta/3$.  We then assume there are approximately $b^s/2$ values in a query\revision{,}{.}
Then the average squared bias is approximated by $\frac{b^{2s}}{4}\frac{\theta^2}{9}$.  Therefore, we use the following equation to approximate the squared error:

\begin{align}
     &(b-1)\left(\log_b(r)-s\right)^3\frac{2\theta^2}{\epsilon^2} + \frac{b^{2s}}{4}\frac{\theta^2}{9}.
     \label{eq:mse_guess}
\end{align}
Given $\epsilon$ and $r$, $s$ can be computed by minimizing the above error.

\mypara{Smoothing Method}
Given $s$, we now describe choices of implementing the smoother.  
We consider a set of methods proposed in the literature, and present their details in \autoref{app:smooth}.  Among them, the most straightforward one is ``Recent'' smoother, which predicts the next values based on the most recent estimation.  
In evaluation, we find it works the best, probably because the dataset we use is spiky.


\subsection{Summary and Discussions}
In summary, our method takes the raw stream $V=\langle v_1, v_2, \ldots\rangle$ as input and outputs a private stream $\tilde{V}=\langle \tilde{v}_1, \tilde{v}_2, \ldots\rangle$. 
\autoref{alg:ToPS2} gives the details of our method:  We first cache the first $m$ values and obtain $\theta$.  Then for each of the following \revision{value}{values}, we first truncate it and then use the hierarchical method together with the smoother to output the noisy value.

\begin{algorithm}[t]
\footnotesize
\LinesNumbered
\caption{\method}
\label{alg:ToPS2}

\KwIn{$V=\langle v_1, v_2, \ldots\rangle$, $\epsilon$, $m$, upper bound on query range $r$}
\KwOut{$\tilde{V}=\langle \tilde{v}_1, \tilde{v}_2, \ldots\rangle$}

$V_m \gets \langle v_1, \ldots v_m\rangle$ \tcp*{Cache the first $m$ values}

\For{$\theta = 1$ to $B$}{
 $m_\theta\gets |\{i\mid v_i \le \theta, i\in[m] \}|$ \tcp*{\autoref{eq:m}}
 
 $q_\theta(V_m) \gets -\frac{m \theta}{20 r\epsilon}\sqrt{2(b-1)\log_b^3(r)} - m_\theta$ \tcp*{\autoref{eq:eme_q}}
 
 $\tilde{q}_\theta(V) = q_\theta(V_m) + \Lapp{\frac{1}{\epsilon}}$ \tcp*{$\mathsf{GS}_g=1$, $q$ is monotonic}
 }

 $\theta \gets \argmax_\theta \tilde{q}_\theta(V) $ 
   \tcp*{Find $\theta$ via Noisy Max (\autoref{subsec:dp_primitive})}
 
  \tcc{The previous part of the code finds $\theta$. \\
  Now we are ready to release the stream.}

 $h\gets \log_{16} r$ $;$
 
 $s \gets \argmin_s \left[15\left(\log_{16}(r)-s\right)^3\frac{2\theta^2}{\epsilon^2} + \frac{16^{2s}}{4}\frac{\theta^2}{9}\right]$ \tcp*{\autoref{eq:mse_guess}}
 
 $u\gets \frac{16^s\theta}{2}$ $;$
 
 $build \gets $ True \tcp*{Indicator to build a tree}
 
 \ForEach{$i>m$}{
   $v_i \gets \min(v_i, \theta)$ \tcp*{Truncate}
   
   \uIf {\text{build}}{
     Init an $(h-s)$-layer hierarchy with fan-out $16$ $;$
     
     Assign $0$ to all nodes \tcp*{Build the virtual tree}
     
     Add independent noise $\Lapp{\frac{h-s}{\epsilon}}$ to each node $;$
     
     Make the tree consistent \tcp*{\autoref{app:consist}}
     
     $cur\_node \gets$ left-most noisy node on tree $;$
     
     $build \gets $ False $;$
   }
   
   \uIf {$(i - m) \mbox{ mod } r = 0$}{
     $build \gets $ True \tcp*{Time to build another tree}
   }
   
   $cur\_node\gets cur\_node+ v_i$ $;$
   
   \uIf {$(i - m) \mbox{ mod } 16^s = 0$}{
     
     {\bf Output} $cur\_node - u \times (16^s - 1)$ $;$
     
     $u \gets cur\_node $ $;$
     
     $cur\_node\gets$ next noisy node on tree $;$
   }
   \uElse{
     
     {\bf Output} $u/16^s$ \tcp*{``Recent'' smoother in \autoref{app:smooth}}
     
   }
} 
\end{algorithm}

In this paper, we focus on the setting used in \pak, where the threshold optimizer does not publish the first $m$ values, but uses them to obtain $\theta$.  After the first $m$ values, it sends $\theta$ to the perturber and truncates any incoming value by $\theta$.   
The perturber then outputs values using the hierarchical method, and there is a smoother that further processes the result.  
Our method is also flexible and can work in other settings.  
We will discuss more about the flexibility of \method in \autoref{sec:conc}.

We claim that \method satisfies $\epsilon$-DP.  The perturber uses $\epsilon/h$ to add Laplace noise to each layer of the hierarchical structure.  By sequential composition, the overall data structure satisfies $\epsilon$-DP.  To find the threshold, \method uses a disjoint set of $m$ observations and runs an $\epsilon$-DP algorithm.  Due to the parallel composition property of DP, the threshold optimizer and the perturber together satisfy $\epsilon$-DP.  The online consistency algorithm and the smoother's operations are post-processing procedures and do not affect the privacy guarantee.  











\section{Publishing Streams in LDP Setting}
\label{sec:method_ldp}
In this section, we introduce \methodldp for publishing streaming data under local DP (LDP).  
To the best of our knowledge, this is the first algorithm that deals with this problem under LDP.

In LDP, users perturb their values locally before sending them to the server, and thus do not need to trust the server.
Applying to the streaming values in our setting, each value should be perturbed before being sent to the server.  What the server does is only post-processing of the perturbed reports.  

\methodldp follows the design framework of \method.  There is a threshold optimizer to find the threshold based on the optimal estimated error, and the threshold is used to truncate the users' values in the later stage.  Different from the centralized DP setting, in the local setting, the obtained threshold will be shared with the users so that they can truncate their values locally.  
The perturber section is also run within each user's local side, because of the privacy requirement that no other parties other than the users themselves can see the true data.  There is no smoother section.
In what follows, we describe the construction for the threshold optimizer and the perturber. 


\subsection{Design of the Threshold Optimizer}
\label{sec:threshold_ldp}
In LDP, each user only has a local view (i.e., they only know their own data; no one has a global view of the true distribution of all data), thus there is no Noisy Max mechanism (NM) (described in \autoref{subsec:dp_primitive}) that we can use as in the DP setting.  
Instead, most existing LDP algorithms rely on frequency estimation, i.e., estimation of how many users possess each value, as what the Laplace mechanism does in DP.
We also rely on the frequency estimation to find the optimal threshold.  Although the distribution estimation is more informative, it is actually less accurate than the Noisy Max mechanism because (to publish more information) more noise needs to be added.

\mypara{Frequency Estimation in LDP}
%
%
%
Li et al.~\cite{sigmod:li2019estimating} propose the Square Wave mechanism (SW for short) for ordinal and numerical domains.  It extends the idea of Randomized Response~\cite{jasa:Warner65} in that values near the true value will be reported with high probability, and those far from it have a low probability of being reported.  
The server, after receiving the reports from users, runs a specially designed Expectation Maximization algorithm to find an estimated density distribution that maximizes the expectation of observing the output.
For completeness, we describe details about SW in \autoref{app:ldp_fo}.  

\mypara{Optimized Threshold with Estimated Distribution}
To find the threshold, the baseline method is to find a specific percentile as the threshold $\theta$.  This method is used for finding frequent itemset~\cite{sp:WangLJ18}.
Based on the lessons learned from the threshold optimizer in the DP setting, we use the optimization equation given in \autoref{lemma:variance_bias} to find $\theta$.

Specifically, denote $\tilde{f}$ as the estimated distribution where $\tilde{f}_t$ is the estimated frequency of value $t$.  
Here the set of all possible $t$ to be considered can no longer be $[B]=\{1,2,\ldots,B\}$.  Instead, we sample $1024$ values uniformly from $[B]$.  This is because SW uses the Expectation Maximization algorithm, and a large domain size makes it time- and space-consuming.  Similar to \autoref{eq:range_error} considered in the DP setting, we use an error formula:
\begin{align}
    \frac{r}{3}\cdot \Var{\tilde{v}} + \frac{r^2}{6}\left( \sum_{\theta<t<B}\tilde{f}_t(t-\theta)\right)^2.\label{eq:range_error_ldp}
\end{align}
Here $\Var{\tilde{v}}$ denotes the variance of estimating $v$, which we will describe later.  It is multiplied by $\frac{r}{3}$ because in expectation, a random range query will involve $\frac{r}{3}$ values, and each of them is estimated independently.  For the second part of \autoref{eq:range_error_ldp}, it can be calculated directly with SW.  The multiplicative coefficient $\frac{r^2}{24}$ is the averaged case over all possible range queries.  That is, denote $j$ as the range of a query, there are $r-j+1$ range-$j$ queries within a limit $r$.  In total, there are $\sum_{j=1}^r (r-j+1)$ possible queries.  For each of them, we have a $j^2$ coefficient in the squared bias.  Thus, we have $\frac{\sum_{j=1}^r (r-j+1) j^2}{r(r+1)/2} = \frac{(r+1)(2r+1)}{3} - \frac{r(r+1)}{2}\approx \frac{r^2}{6}$ as the average-case coefficient.


\mypara{Using SW as a White Box}
\revision{One thing to note is that, SW is proposed for estimating smoothed distributions, while in our case, the distribution is very skewed, because the majority of values are expected to be concentrated below a threshold.}{}
\revision{To make SW output a reasonable threshold,}{To find a reasonable threshold using SW,} \revision{instead of using SW as a black-box, }{}we make the following modifications.  First, we eliminate the smoothing step from SW, because we observe that in some cases, \revision{the}{} smoothing \revision{operation}{} will ``push'' the \revision{density}{estimated probability density} to the two ends \revision{}{of the range}\revision{, which deviates even further from the underlying dataset}{}.  \revision{}{If some density is moved to the high end, the chosen threshold $\theta$ can be unnecessarily large.} 

Second, we add a post-processing step to prune the small densities outputted by SW. 
In particular, we find the first \revision{$w$ so that the consecutive $5$ estimates are all below $0.01\%$.}{qualified value $w$, whose next $5$ consecutive estimates are all below $0.01\%$.}   This is a signal that the density after $w$ will converge to $0$.  We thus replace the estimated density after $w$ with $0$.  In the experiment, we observe that the two steps help to find a more accurate $\theta$.



\subsection{Design of the (Local) Perturber}
\label{sec:perturber_ldp}
After obtaining the threshold $\theta$, the server sends $\theta$ to all users.  When a user reports a value, it will first be truncated.  The user then reports the truncated value using the Hybrid mechanism.  The method is described in \autoref{app:ldp_mean}.  It can estimate $v$ with worst-case variance given in \autoref{eq:var_hm}, which can be plugged into \autoref{eq:range_error_ldp} to find $\theta$.
Note that the reports are unbiased by themselves.  So to answer a range query, we just need to sum up values from the corresponding range, and there is no need for a smoother.

\begin{figure*}[h]
	\centering
	
	\subfigure[\method, $\epsilon=0.01$]{
		\includegraphics[width=0.48\textwidth]{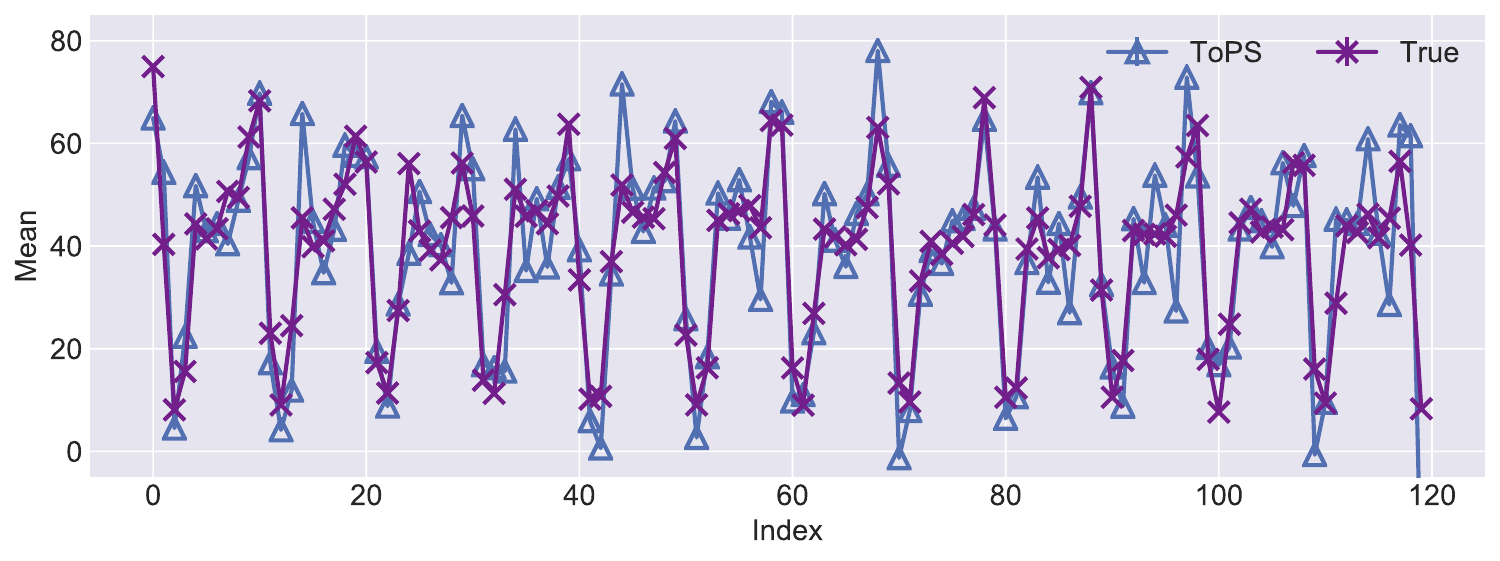}
	}
	\subfigure[\method, $\epsilon=0.05$]{
		\includegraphics[width=0.48\textwidth]{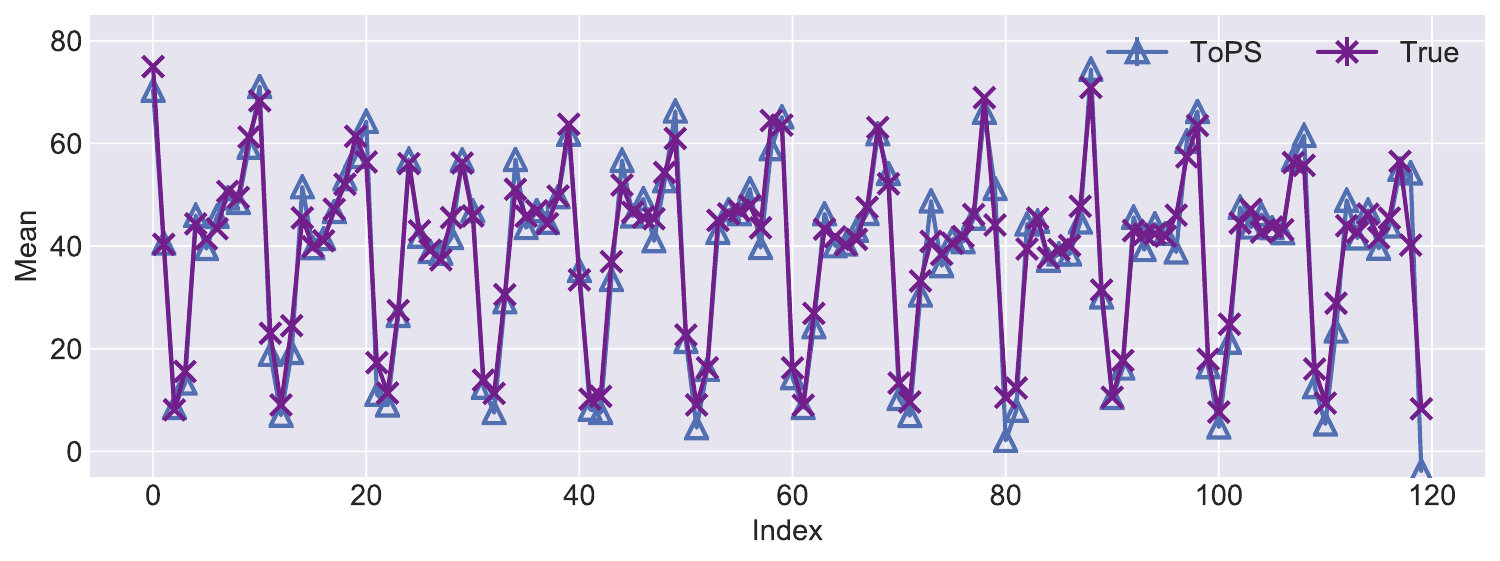}
	}
	\subfigure[\pak, $\epsilon=0.1$]{
		\includegraphics[width=0.48\textwidth]{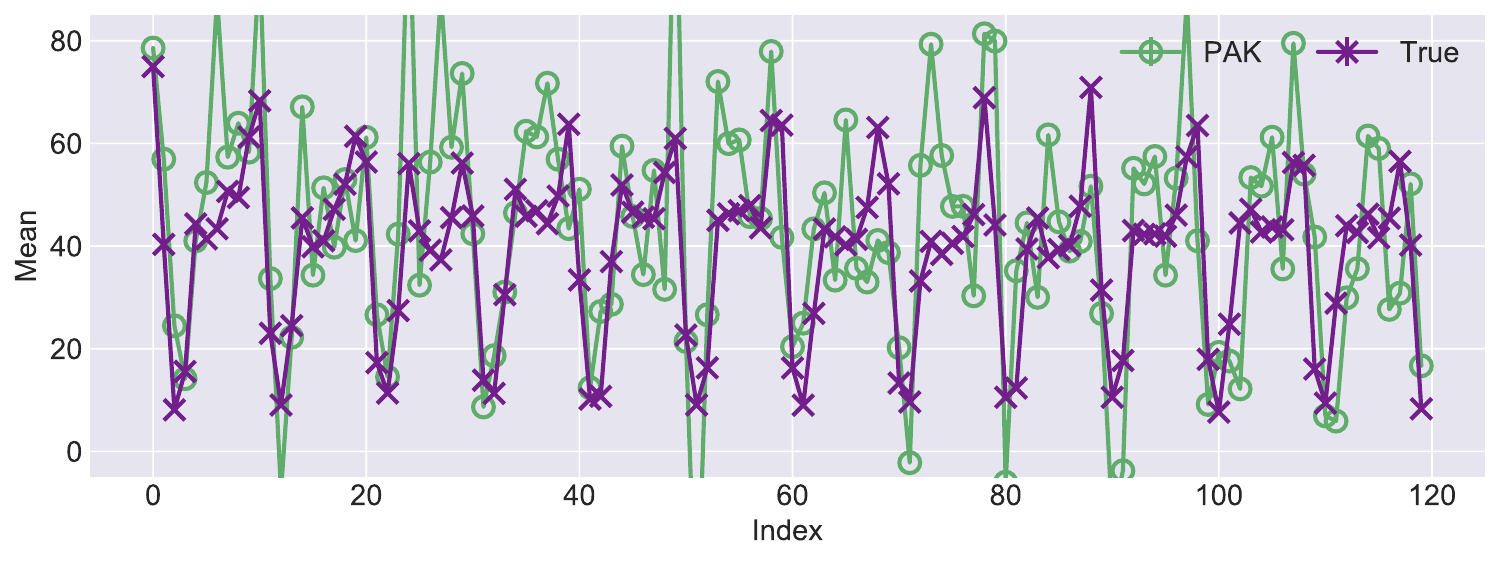}
	}
	\subfigure[\pak, $\epsilon=0.5$]{
		\includegraphics[width=0.48\textwidth]{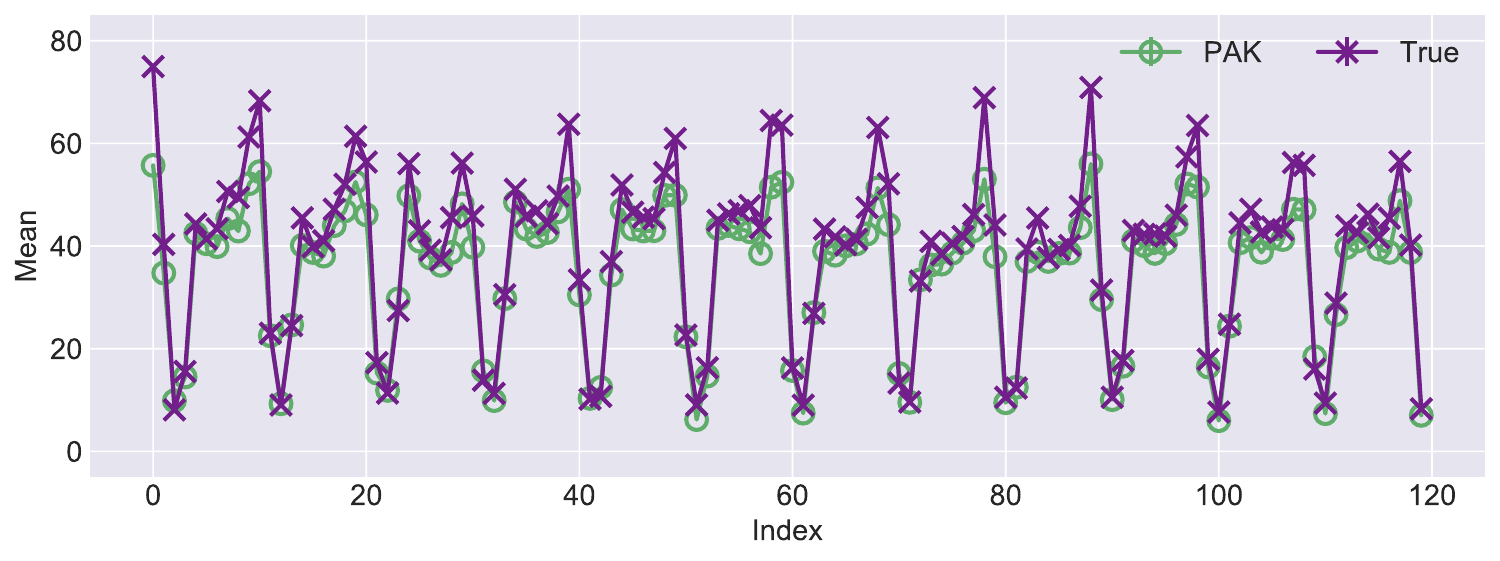}
	}
	
	\caption{Visualizations of the DNS stream.  The $x$-axes correspond to time (we partition the 14-day timeframe into 120 intervals, so each point corresponds to the mean of roughly 9000 values or 1.4 hours), and $y$-axes denotes the moving average.  Our \method at $\epsilon=0.01$ can output predictions that are pretty close to the ground truth.  \pak gives noisier result even with larger $\epsilon$ values. }
	\label{fig:vis}
\end{figure*}

\section{Experimental Evaluation}
\label{sec:exp}
The experiment includes four phases.  First, we give a high-level end-to-end evaluation of the whole process.
Second, we evaluate the performance of the hierarchical method with a fixed truncation threshold.
Third, we fix the hierarchical method and test different algorithms that give the threshold.
Fourth, we evaluate the performance in the local setting.

\subsection{Evaluation Setup}
\mypara{Datasets}
A total of four real-world datasets are examined.
\begin{itemize}[leftmargin=*]
    \item DNS: 
    This dataset is extracted from a set of DNS query logs collected by a campus resolver with all user ids and source IP addresses removed~\footnote{The data collection process has been approved by the IRB of the campus.}. It includes 14 days of DNS queries.
    \revision{The number of queries is estimated, which can be used to assess Internet usage in a region.}{The network administrator can use the number of queries to assess Internet usage in a region.  We take number of queries as the stream in the evaluation.}


    \item Fare~\cite{data:nyctaxi}: New York City taxi travel fare.  We use the Yellow Taxi Trip Records for January 2019.
    
    \item Kosarak~\cite{data:kosarak}: A dataset of clickstreams on a Hungarian website that contains around $10^6$ users and 41270 categories. 
    We take it as streaming data and use the size of click categories as the value of the stream.
    
    \item POS~\cite{data:pos}: A dataset containing merchant transactions of half a million users and 1657 categories.   We use the size of the transaction as the value of the stream.
\end{itemize}

Table~\ref{tbl:datadist} gives the distribution statistics of the datasets.  

\begin{table}[!t]
\footnotesize
\centering
\caption{Dataset Characteristics}
\label{tbl:datadist}
\begin{tabular}{@{}c c c c c c c c@{}}
\toprule
Dataset & $n$ & max  & $p_{85}$ & $p_{95}$ & $p_{99.5}$  & \revision{}{$p_{100}$} & $\mathrm{avg}$\\ 
\midrule
DNS & 1141961 & 2000  & 63 & 85 & 135  & \revision{}{617} & 37.9\\ 
Fare & 8704495 & 30000  & 440 & 1036 & 2037  & \revision{}{26770}& 279.9\\
Kosarak & 990002 & 41270  & 10 & 28 & 133 & \revision{}{2498} & 8.1\\ 
POS & 515597 & 1657 & 13 & 21 & 39 & \revision{}{165}  & 7.5\\ 
\bottomrule
\end{tabular}
\end{table}

\begin{figure*}[h]
	\centering
	\subfigure[Fare]{
		\includegraphics[width=0.23\textwidth]{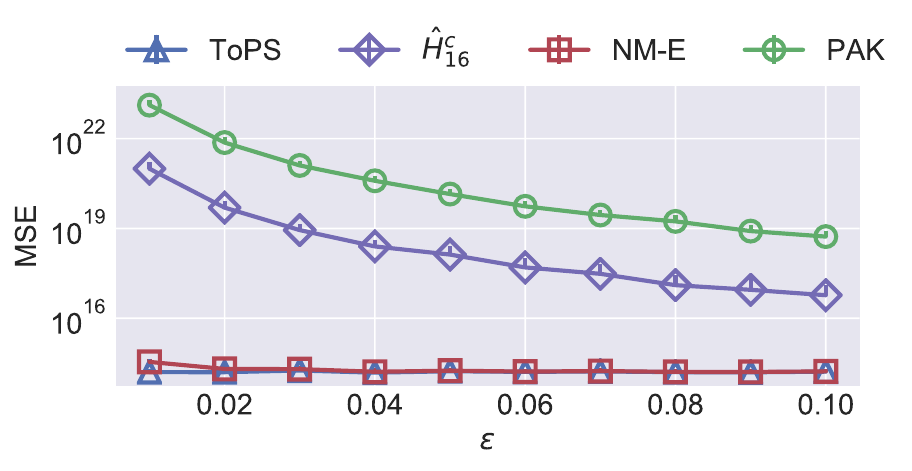}
	}
	\subfigure[DNS]{
		\includegraphics[width=0.23\textwidth]{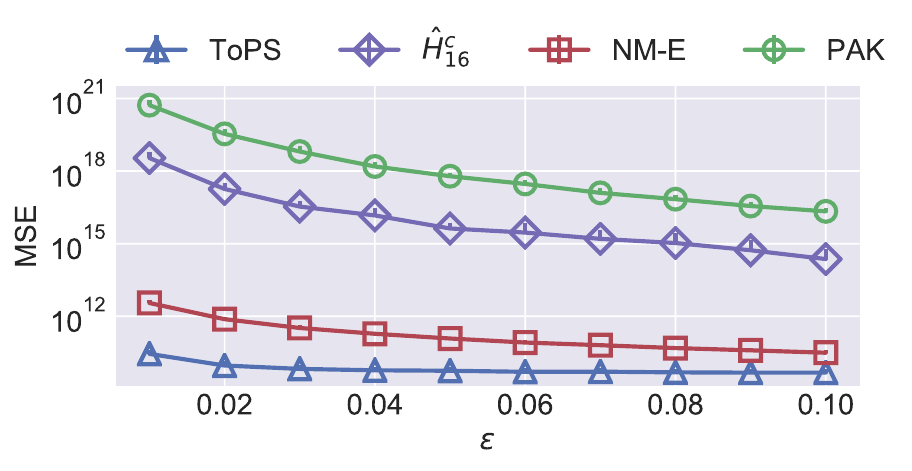}
	}
	\subfigure[Kosarak]{
		\includegraphics[width=0.23\textwidth]{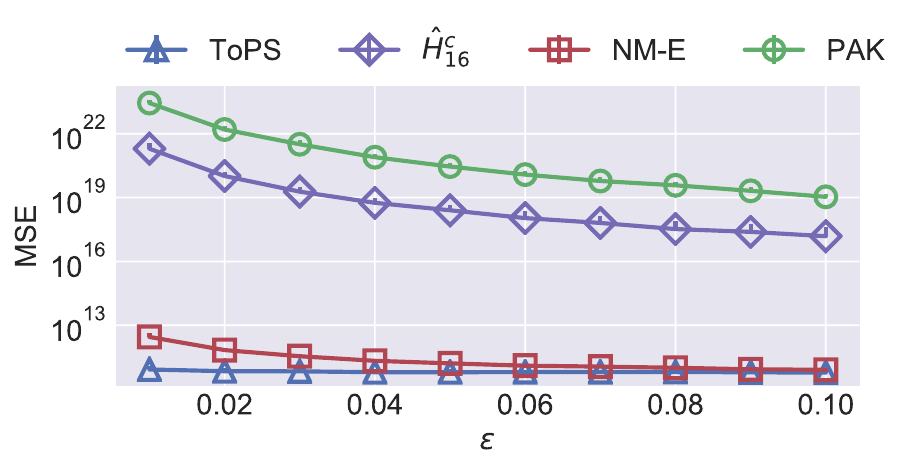}
	}
	\subfigure[POS]{
		\includegraphics[width=0.23\textwidth]{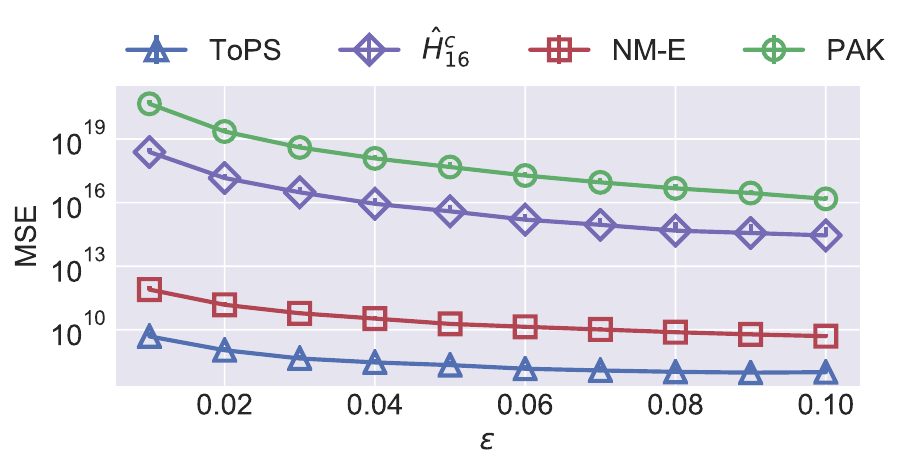}
	}
	\caption{Comparison between \pak and \method when answering range queries.  We also include two intermediate methods NM-E (our proposed threshold optimizer) and $\hat{H}_{16}^c$ (our proposed the perturber and smoother) that replace the corresponding two phases of \pak to demonstrate the performance boost due to our new design.}
	\label{fig:query_mse_vary_e_med}
\end{figure*}

\mypara{Metrics} 
To evaluate the performance of different methods,  we use the metric of Mean Squared Error (MSE) to answer randomly generated queries.  
In particular, we measure
\begin{align}
    \mbox{MSE}(Q) = \frac{1}{|Q|}\sum_{(i,j)\in Q}{\left[\tilde{V}(i,j) - V(i,j)\right]^{2}}.\label{eq:exp_metric}
\end{align}
where $Q$ is the set of the randomly generated queries. 
It reflects the analytical utility measured by \autoref{eq:ana_metric} from \autoref{subsec:formal_problem_def} (to demonstrate the actual accuracy, we also have results for mean absolute error in \autoref{app:more_results}).  
We set $r=2^{20}$ as the maximal range of any query.

\begin{figure*}[h]
	\centering
	\subfigure[Fare]{
		\includegraphics[width=0.23\textwidth]{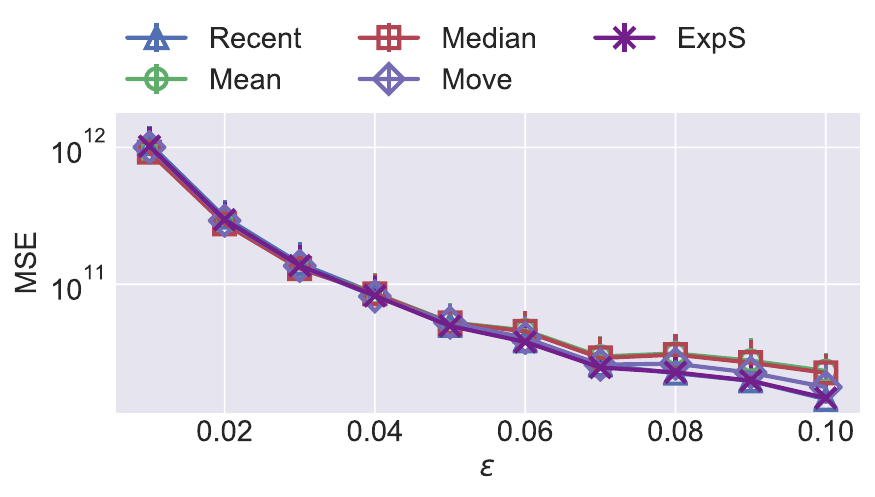}
	}
	\subfigure[DNS]{
		\includegraphics[width=0.23\textwidth]{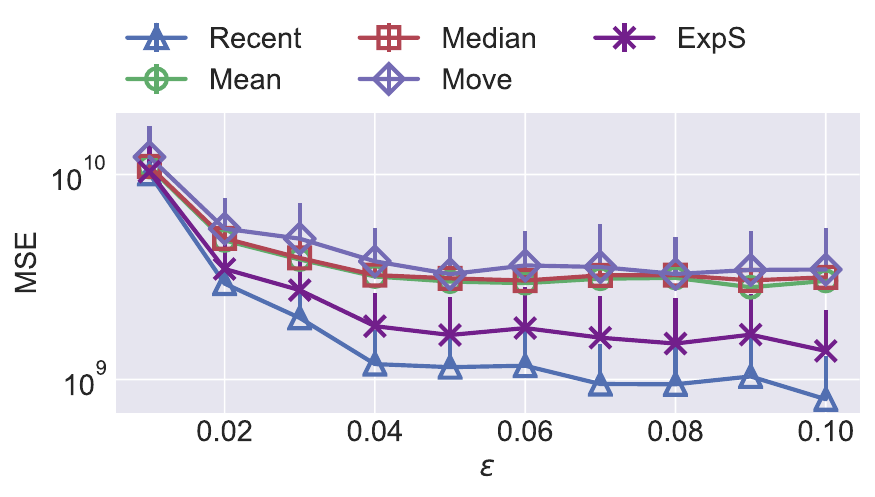}
	}
	\subfigure[Kosarak]{
		\includegraphics[width=0.23\textwidth]{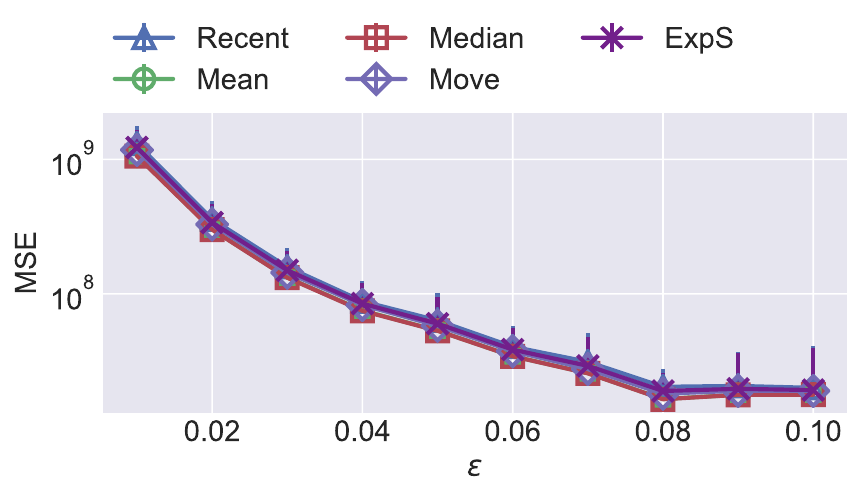}
	}
	\subfigure[POS]{
		\includegraphics[width=0.23\textwidth]{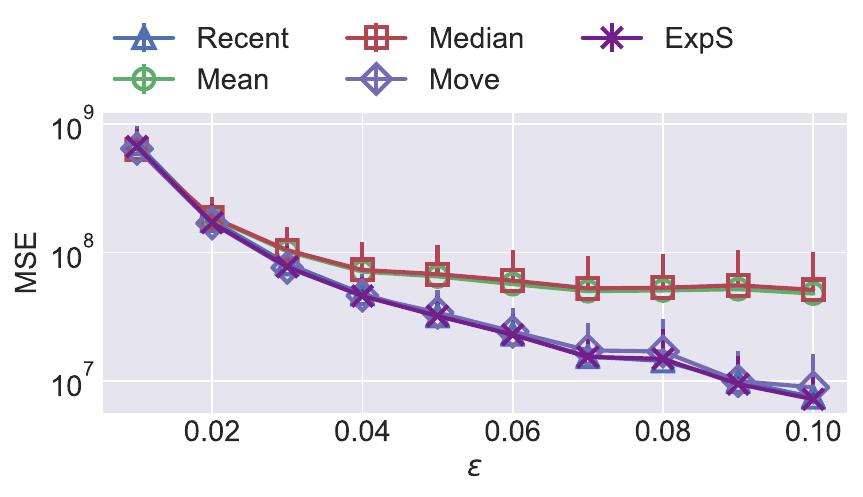}
	}
	\caption{Evaluation of different smoothing techniques.  We vary $\epsilon$ from $0.01$ to $0.1$ in the $x$-axis. The $y$-axis shows the query accuracy (MSE).}
	\label{fig:smooth_vary_e_med}
\end{figure*}

\mypara{Methodology} 
The prototype was implemented using Python 3.7.3 and NumPy 1.15.3 libraries.  
The experiments were conducted on servers running Linux kernel version 5.0 with Intel Xeon E7-8867 v3 CPU @ 2.50GHz and 576GB memory.  
For each dataset and each method, we randomly choose 200 range queries and calculate their MSE. We repeat each experiment 100 times and report the result of mean and standard deviation.  Note that the standard deviation is typically very small, and barely noticeable in the figures.

	
	
	
	

\subsection{End-to-end Comparison}
First, as a case study, we visualize the estimated stream of our method \method on the DNS dataset (\autoref{fig:vis}).  
The top row shows the performance of \method while the bottom row shows that of \pak.  
We run algorithms once for each setting to demonstrate the real-world usage.  Similar to the setting of \pak, in \method, we use the first $m=65,536$ observations to obtain the threshold $\theta$ (we will show later that \method does not need this large $m$ observations to be held).  
\autoref{fig:vis} indicates that our method \method can give fairly accurate predictions when $\epsilon$ is very small.  On the other hand, \pak, though under a larger $\epsilon$, still performs worse than \method.  Note that to obtain the threshold $\theta$, \pak satisfies $(\epsilon, \delta)$-DP while our \method satisfies pure $\epsilon$-DP during the whole process. 

We then compare the performance of \method and \pak with our metric of MSE given in \autoref{eq:exp_metric}, and show the results in \autoref{fig:query_mse_vary_e_med}.  Between \method and \pak, we also include two intermediate methods that replace Phase 1 (finding $\theta$) and Phase 2 (hierarchical method) of \pak by our proposed method NM-E (used in threshold optimizer) and $\hat{H}_{16}^c$ (used for the perturber and smoother together), respectively, to demonstrate the performance boost due to our new design (we will evaluate the two phases in more details in later subsections).  From the figure, we can see that the performance of all the algorithms gets better as $\epsilon$ increases, which is as expected.  Second, our proposed \method can outperform \pak by $7$ to $11$ orders of magnitude.  Third, the effect (in terms of improving utility) using NM-E is much more significant than using $\hat{H}_{16}^c$.  Interestingly, the performance of ToPS and NM-E is similar in the Fare and Kosarak datasets.  This is because in these cases, the bias (due to truncation by $\theta$) is dominant.

\begin{figure*}[h]
	\centering
	\subfigure[Fare]{
		\includegraphics[width=0.23\textwidth]{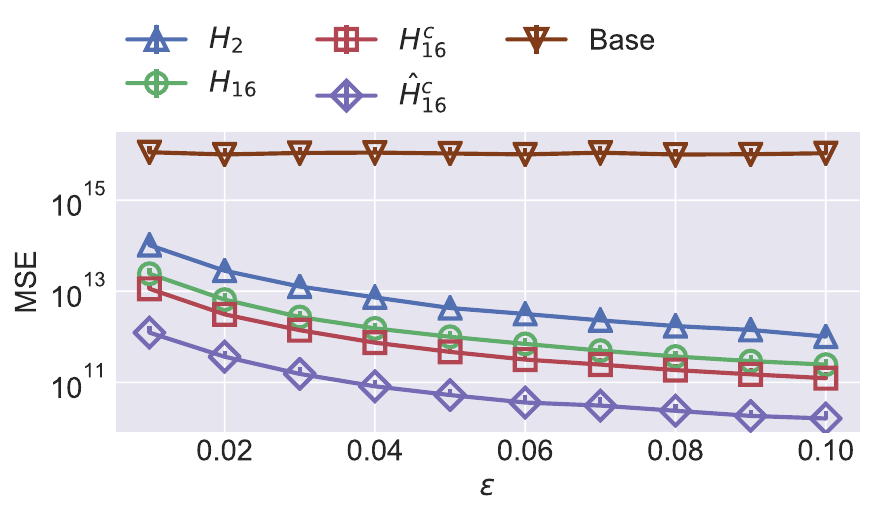}
	}
	\subfigure[DNS]{
		\includegraphics[width=0.23\textwidth]{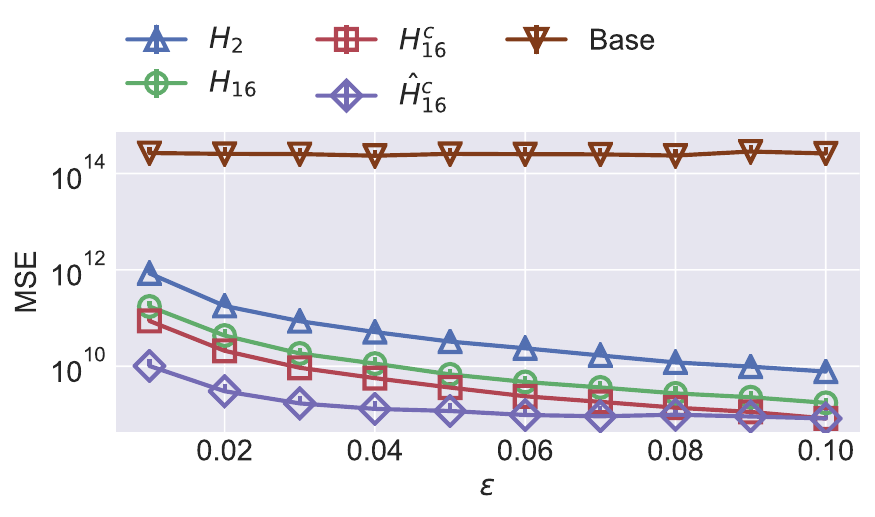}
	}
	\subfigure[Kosarak]{
		\includegraphics[width=0.23\textwidth]{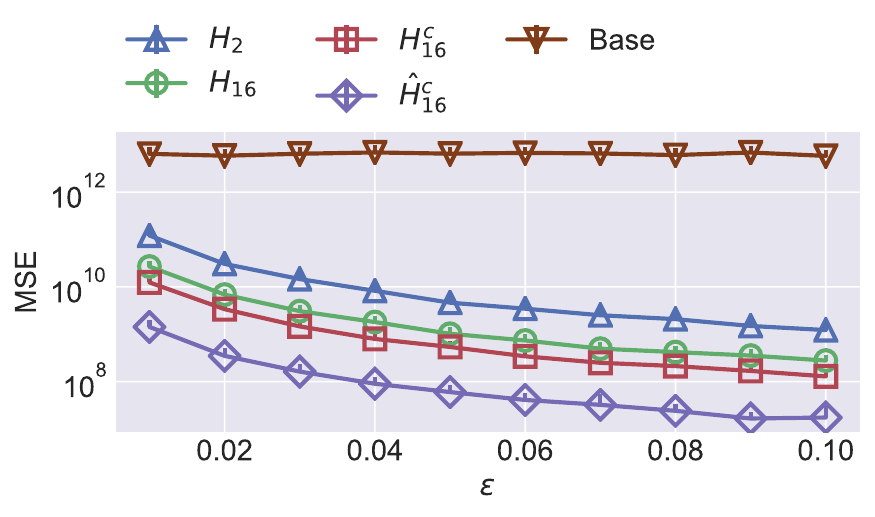}
	}
	\subfigure[POS]{
		\includegraphics[width=0.23\textwidth]{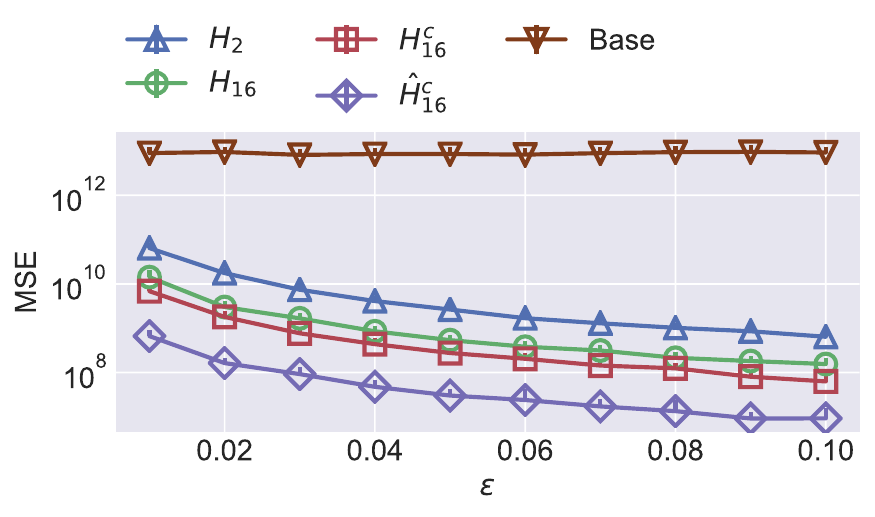}
	}
	\caption{Evaluation of different methods of outputting the stream.  We vary $\epsilon$ from $0.01$ to $0.1$ in the $x$-axis and plot the query accuracy (MSE) in the $y$-axis.}
	\label{fig:hie_vary_e_med}
\end{figure*}

\subsection{Comparison of Stream Publication Phase}
Several components contribute to the promising performance of \method.  To demonstrate the precise effect of each of them, we next analyze them one by one in the reverse order.  We first fix other \revision{things}{configurations} and compare different smoothers and perturbers.  Subsequently, we analyze the methods of obtaining the threshold $\theta$ in \autoref{sec:exp_threshold}.  To make the comparison clear, we set $\theta$ to be the $95$-th percentile of the values.  Moreover, we assume the true values are no larger than $\theta$ (the ground truth is truncated).  We will compare the performance of different methods in obtaining $\theta$ in \autoref{sec:exp_threshold}.

\mypara{Comparison of Different Smoothers}
Fixing a threshold $\theta$ and the hierarchical method optimized in \autoref{sec:perturber}, we now compare the performance of five smoother algorithms listed in \autoref{sec:smoother} (note that the smoothers will replace the $16^s$ values where $s$ is given in \autoref{eq:mse_guess}).  
\autoref{fig:smooth_vary_e_med} shows the MSE of the smoothers given $\epsilon$ from $0.01$ to $0.1$.
As $\epsilon$ increases, that is, privacy budget loosens, the overall performance improves although the difference is very small in Fare and Kosarak datasets.
This is because there is no clear pattern in these datasets.  In the DNS dataset, Recent performs better than others, as the data is stable in the short term.  In POS, the method of Mean and Median performs worse than the other three.  This is because Mean and Median consider all the history (all the previous $u_i$ values given from the hierarchy, as described in \autoref{sec:smoother}), while the other methods consider more recent results.  Methods that utilize the recent output (i.e., the more recent $u_i$) will perform better due to the stability property in the dataset (similar to the case of DNS).
Since Recent performs the best in DNS, and is among the best in other datasets, we use it as the default smoother algorithm.

Note that large MSE does not always mean poor utility. Here MSE is large because (1) the original values are large (e.g., $>$1000), (2) the range query takes the sum, and (3) the square operation further enlarges the values.  We also have results for mean absolute error (MAE) and mean query results to better demonstrate the utility in \autoref{app:more_results}.


\mypara{Comparison of Different Hierarchical Algorithms}
To demonstrate the precise effect of each design detail, we line up several intermediate protocols for the hierarchy and threshold, respectively.  
For the hierarchy component, we evaluate:
\begin{itemize}[leftmargin=*]
    \item $H_2$: Original binary tree used in \pak.
    \item $H_{16}$: \revision{A good}{The optimal} fan-out $b=16$ is used in the hierarchy.
    \item $H_{16}^c$: \revision{The optimal}{} $H_{16}$ with consistency method.
    \item $\hat{H}_{16}^c$: We use the hat notation to denote the Recent smoother.  It is built on top of $H_{16}^c$.
    \item Base: A baseline method that always outputs $0$. It is used to understand whether a method gives meaningful results. 
\end{itemize}



\autoref{fig:hie_vary_e_med} gives the result varying $\epsilon$ from $0.01$ to $0.1$.  First of all, all methods (except the baseline) yield better accuracy as $\epsilon$ increases, which is as expected.  Moreover, the performance of all methods (except $\hat{H}_{16}^c$) increases by a factor of $100\times$ when $\epsilon$ increases from $0.01$ to $0.1$.  This observation is consistent with the analysis that variance is proportional to $1/\epsilon^2$.  Comparing each method, using the optimal branching factor ($H_{16}$ versus $H_2$) can improve the performance by $5\times$. Moreover, we have approximately another $2\times$ ($H^c_{16}$ versus $H_{16}$) of accuracy boost by adopting the consistency algorithm.
For $\hat{H}_{16}^c$, a constant $10\times$ improvement can be observed over ${H}_{16}^c$ except in the DNS dataset, where $\hat{H}_{16}^c$ performs roughly the same as ${H}_{16}^c$ when $\epsilon > 0.08$.  
The reason is that the error of $\hat{H}_{16}^c$ is composed of two parts. One is the noise error from the constraint of DP, the other the bias error of outputting the predicted values.  When $\epsilon$ is large, the bias dominates the noise in the DNS dataset.


\begin{figure}[t]
	\centering
    \subfigure[$m=4096$, smaller $\epsilon$ range]{
		\includegraphics[width=0.22\textwidth]{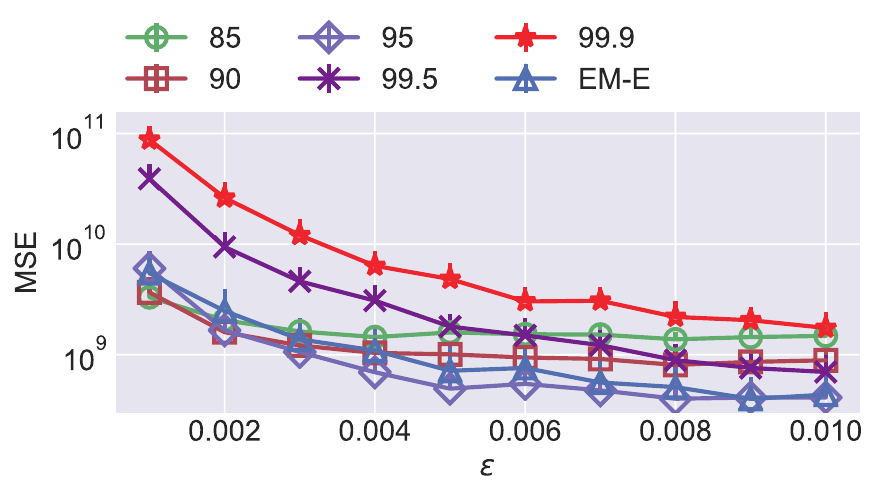}
	}
    \subfigure[$m=65536$, smaller $\epsilon$ range]{
		\includegraphics[width=0.22\textwidth]{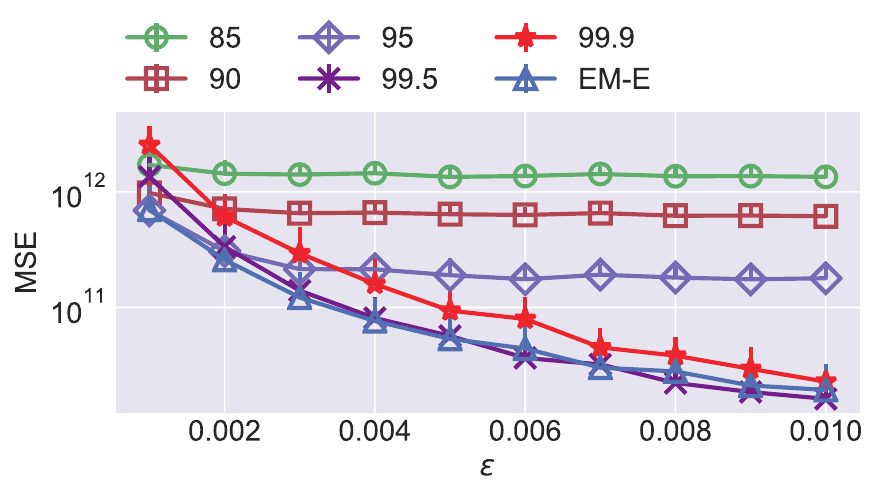}
		\label{subfig:diff_p_m65536_eps_small}
	}

	\centering
	\subfigure[$m=4096$, larger $\epsilon$ range]{
		\includegraphics[width=0.22\textwidth]{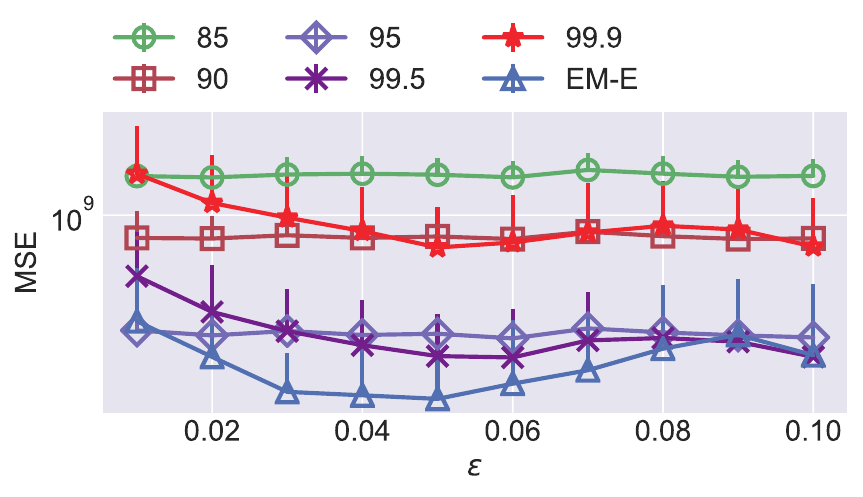}
	}
	\subfigure[$m=65536$, larger $\epsilon$ range]{
		\includegraphics[width=0.22\textwidth]{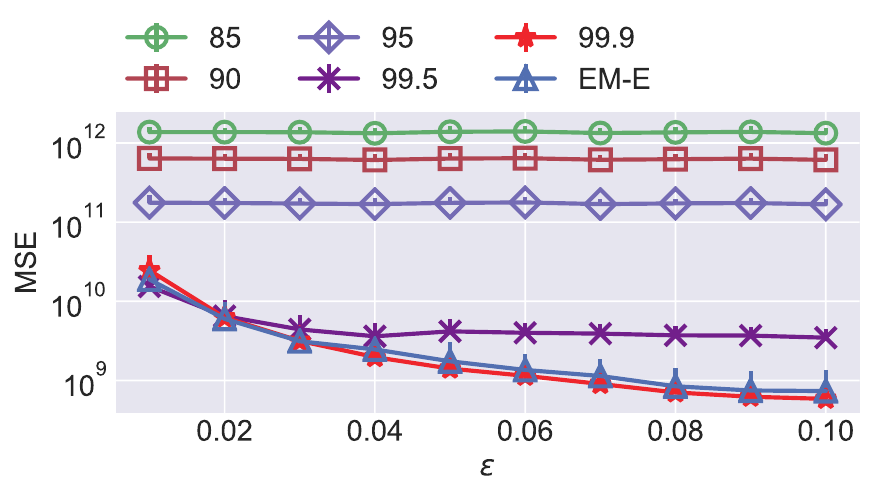}
	    \label{subfig:diff_p_m65536_eps_medium}
	}
	\caption{MSE of answering range queries using $\hat{H}^c_{16}$ on DNS dataset.  The true $p$-th percentile for different $p$ values are evaluated. We also include NM-E, which uses $\epsilon=0.05$.}
	\label{fig:diff_p}
\end{figure}

\begin{figure*}[h]
	\centering
	\subfigure[Fare]{
		\includegraphics[width=0.23\textwidth]{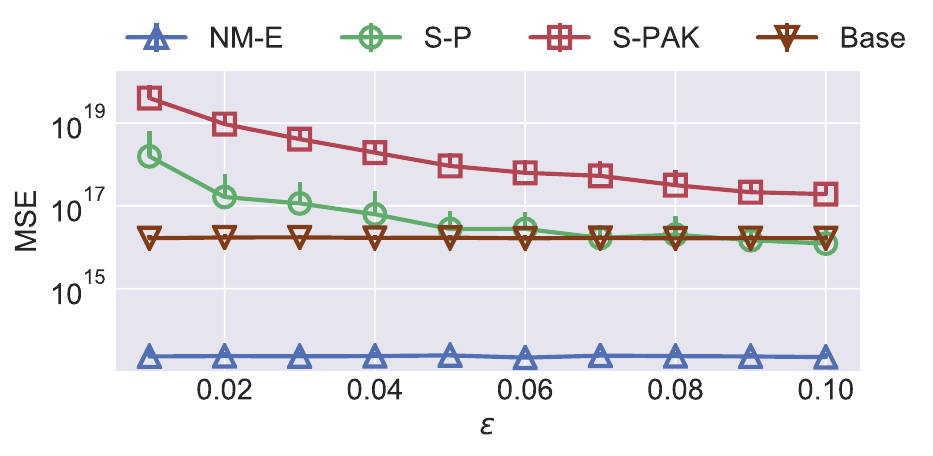}
	}
	\subfigure[DNS]{
		\includegraphics[width=0.23\textwidth]{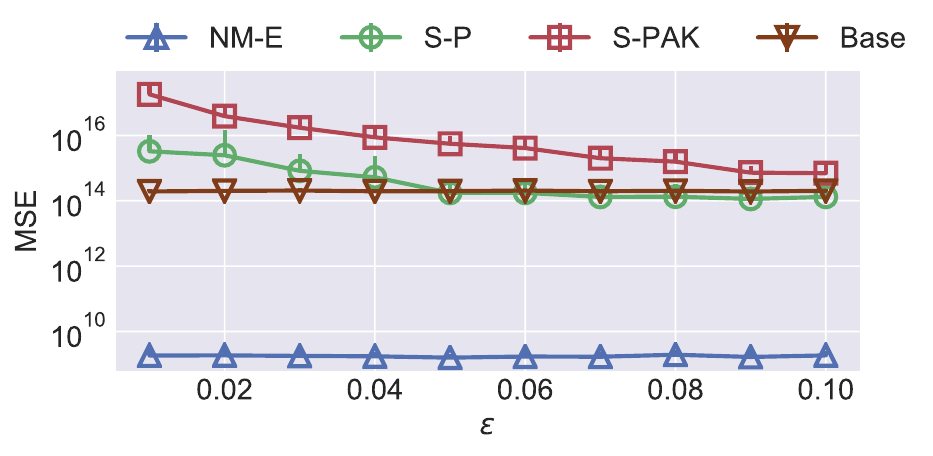}
	}
	\subfigure[Kosarak]{
		\includegraphics[width=0.23\textwidth]{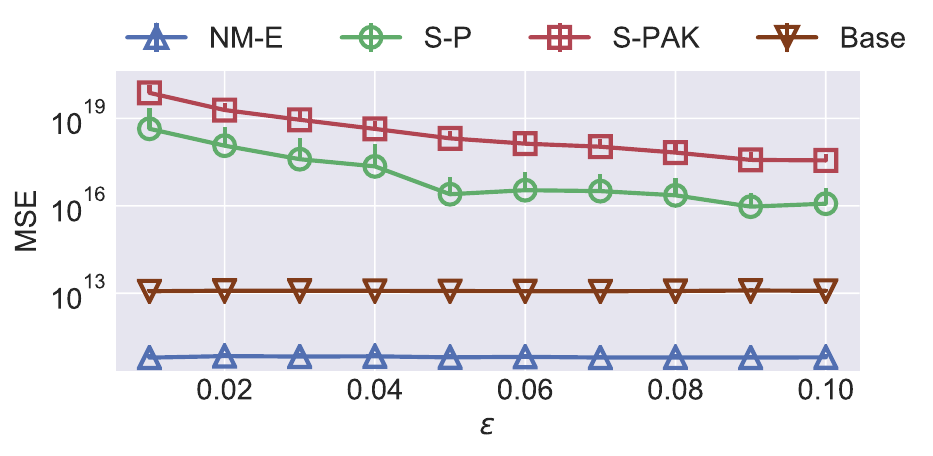}
	}
	\subfigure[POS]{
		\includegraphics[width=0.23\textwidth]{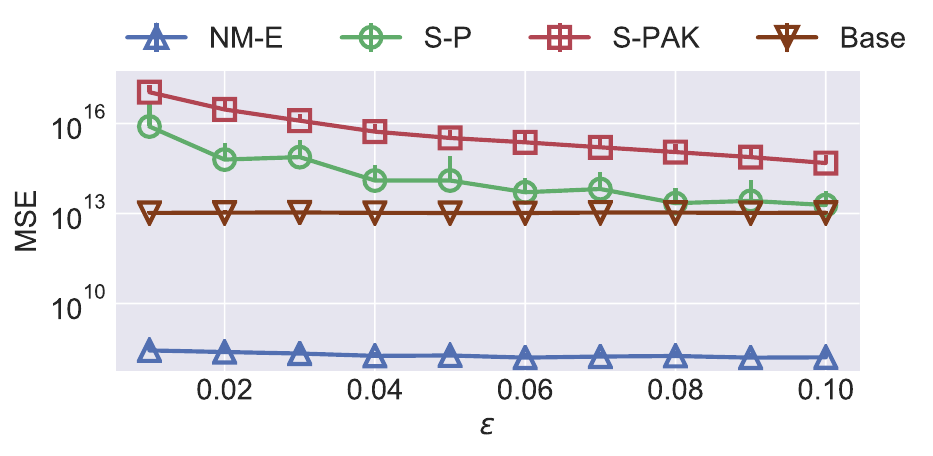}
	}
	\caption{Evaluation of different methods to find the threshold $\theta$.  We vary $\epsilon$ from $0.01$ to $0.1$ in the $x$-axis.  The $y$-axis shows the MSE of answering range queries using $\hat{H}^c_{16}$ (to make comparison clear, we use a fixed $\epsilon=0.05$ for it).  Base is a baseline method that always outputs $0$.}
	\label{fig:thres_vary_e_medium}
\end{figure*}

\begin{figure*}[h]
	\centering
	\subfigure[Fare]{
		\includegraphics[width=0.23\textwidth]{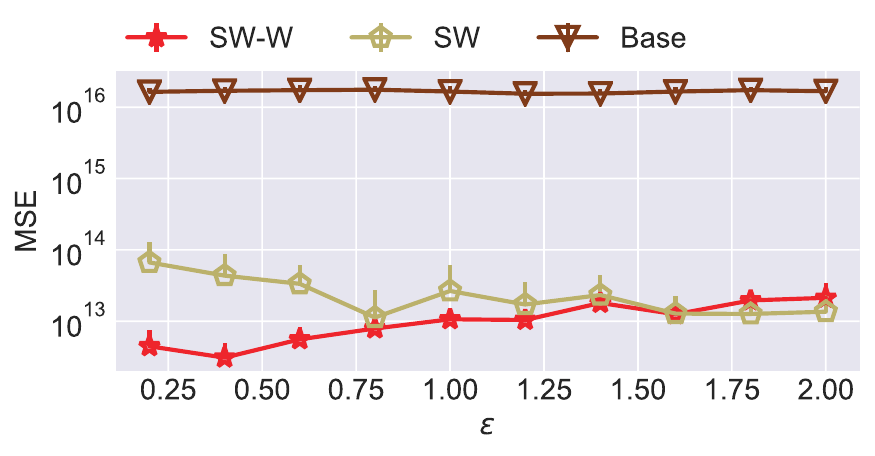}
	}
	\subfigure[DNS]{
		\includegraphics[width=0.23\textwidth]{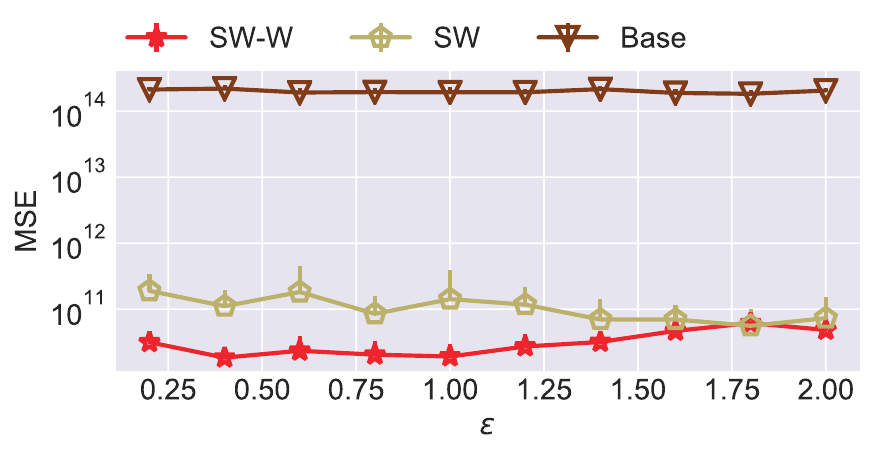}
	}
	\subfigure[Kosarak]{
		\includegraphics[width=0.23\textwidth]{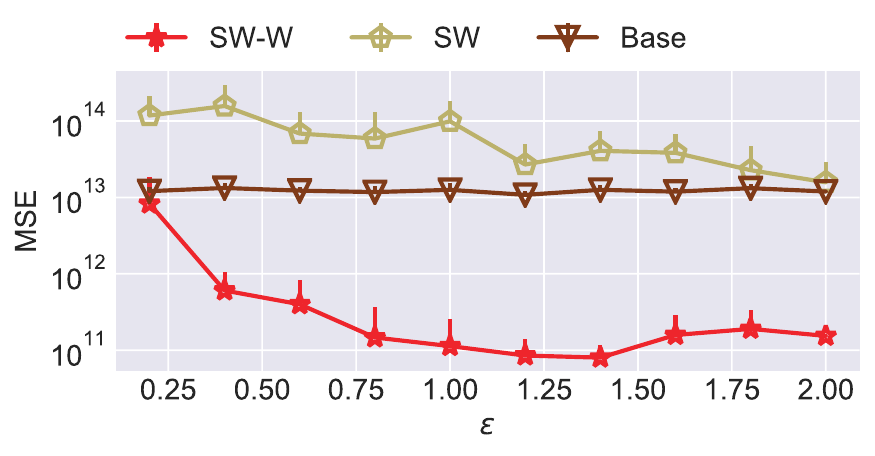}
	}
	\subfigure[POS]{
		\includegraphics[width=0.23\textwidth]{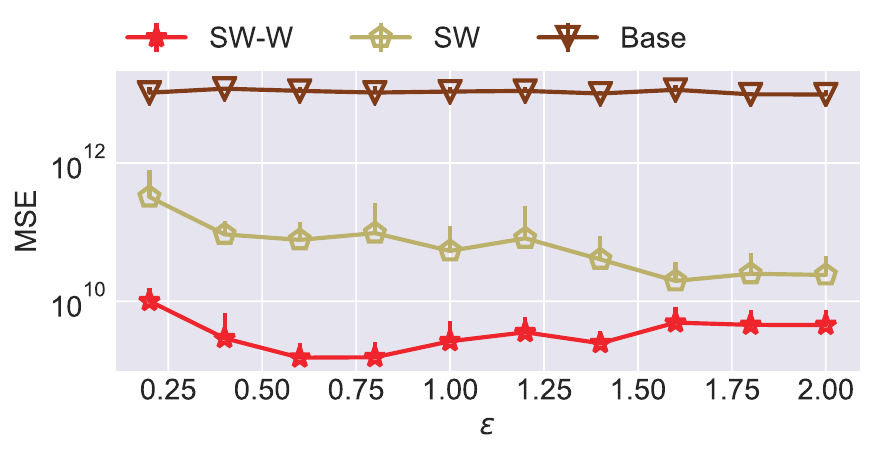}
	}
	\caption{LDP evaluation of different methods of outputting the threshold. We vary $\epsilon$ from $0.2$ to $2$ in the $x$-axis.  The $y$-axis shows the query accuracy (MSE).}
	\label{fig:ldp_threshold_vary_e}
\end{figure*}

\begin{figure}[h]
	\centering
	\subfigure{
		\includegraphics[width=0.48\textwidth]{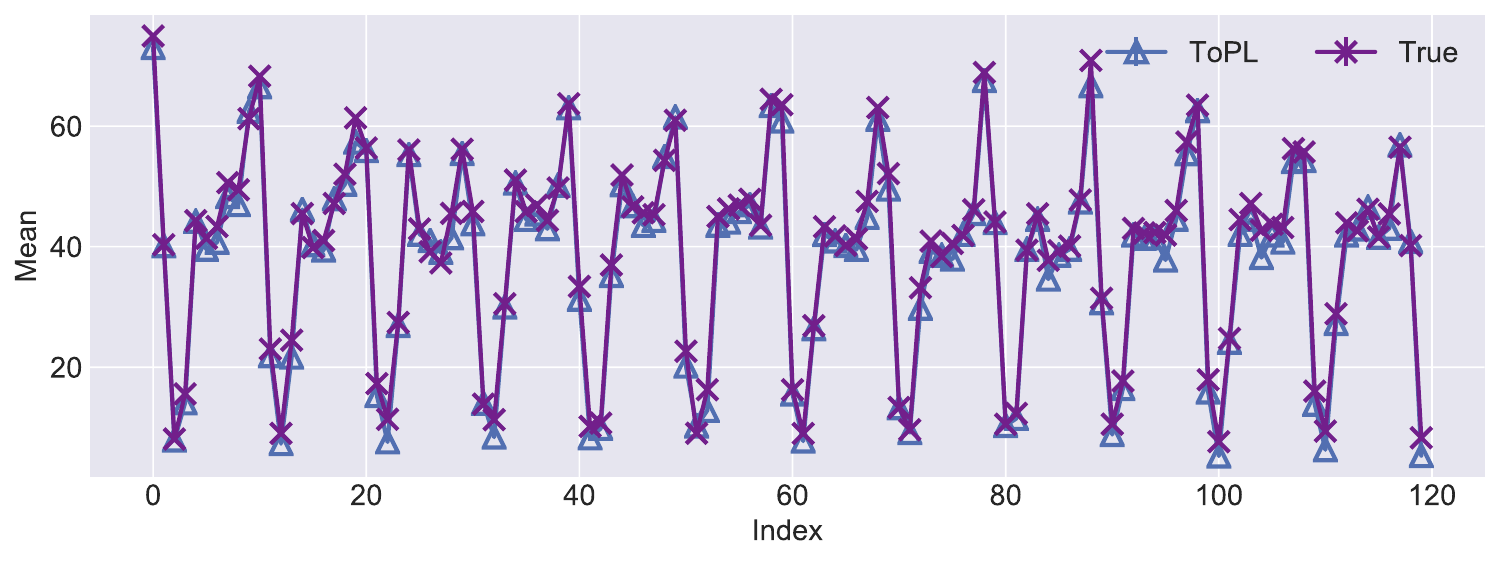}
	}
	\caption{Visualizations of the DNS stream.  The $x$-axes correspond to the time, and the $y$-axes denote the moving average.  Our \methodldp at $\epsilon=1$ can output predictions that are pretty close to the ground truth.}
	\label{fig:vis_ldp}
\end{figure}

\subsection{Comparison of Threshold Phase}
\label{sec:exp_threshold}
After examining the performance of different methods of outputting the stream, we now switch gear to look at the algorithms for finding the threshold. 

\mypara{Setup}
Following \pak's method~\cite{ndss:perrier2019private}, we use the first $m$ values to obtain the threshold.  
To eliminate the unexpected influence of distribution change, we use the same $m$ values for now to build the hierarchy using $\hat{H}^c_{16}$ (with the best smoothing strategy called Recent smoother).  

\mypara{No Single Quantile Works Perfectly for All Scenarios}
To show that no single $p$-quantile can work perfectly for all scenarios, we choose the true $p$-quantile for $p\in\{85, 90, 95, 99.5, 99.9\}$ and test in different scenarios (with different $\epsilon$ and $m$ values).  We also include our threshold optimizer (NM-E) which is introduced to find a threshold only based on the estimated error and set $\epsilon=0.05$ for it.
\autoref{fig:diff_p} shows the results of answering range queries given these true percentiles and it reveals several findings.  First, the performance improves as $\epsilon$ increases for all $p$ values.  Second, in some cases, the performance improvement is negligible with respect to $\epsilon$ (e.g., $p=80$ and $85$ in \autoref{subfig:diff_p_m65536_eps_small} and $p=85$, $90$ and $95$ in \autoref{subfig:diff_p_m65536_eps_medium}).  This is because $p$ is too small in these scenarios, which makes the bias dominate the noise error.
Third, our threshold optimizer with $\epsilon=0.05$ can achieve similar performance with the optimal $p$-quantile.


\mypara{Varying $\epsilon$}
We then compare three methods that output $\theta$:
\begin{itemize}[leftmargin=*]
    \item NM-E: The threshold optimizer in \method.  It does not require a percentile.
    \item S-\pak: The smooth sensitivity method used by \pak.  We use $p=99.5$, as used by \pak.
    \item S-P: The original smooth sensitivity method.  Similar to S-\pak, we also use $p=99.5$.
\end{itemize}
In \autoref{fig:thres_vary_e_medium}, we compare with existing differentially private methods on finding the threshold $\theta$.
We vary the value of $\epsilon$ for obtaining $\theta$, and use $\hat{H}^c_{16}$ to answer range queries.   Note that to make the comparison clearer, we fix $\epsilon=0.05$ in $\hat{H}^c_{16}$.  In all the datasets, our proposed NM-E performs much better than existing methods in terms of MSE.  Moreover, the performance does not change much when $\epsilon$ increases. The reason is that the output of NM-E is stable even with small $\epsilon$.  Finally, both S-\pak and S-P perform worse than the baseline method, which always give $0$ regardless of input values, indicating the $\theta$ given by them is too large.

\subsection{Performance of \methodldp}
In this section, we evaluate the LDP algorithm \methodldp.  We first check the methods to find $\theta$.  Following the setting of \method, we use the first $m=65,536$ observations to obtain the threshold $\theta$.  The difference from the DP setting is that we vary $\epsilon$ in a larger range (from $0.2$ to $2$) due to the larger amount of noise of LDP.  Our method finds $\theta$ by using the Square Wave (SW) mechanism to estimate the distribution, and then minimizing \autoref{eq:range_error_ldp}.  Moreover, our approach (the final part of \autoref{sec:threshold_ldp}) modifies SW to exploit the prior knowledge that the distribution is skewed.  We use SW-W to denote this method. In addition, we include another method for comparison, which uses SW as a black box (and we use SW to denote it).  

\autoref{fig:ldp_threshold_vary_e} shows the performance of finding $\theta$ in \methodldp.  We vary the $\epsilon$ used to find $\theta$ from $0.2$ to $2$ while fixing $\epsilon$ used to output the stream to $1$.  
The performance of SW improves with $\epsilon$ in all four datasets.  The performance of our method is less stable and not improving with $\epsilon$, but SW-W can still outperform SW as well as the baseline in all the four datasets.  In the Fare and DNS datasets, the advantage of SW-W is not significant when $\epsilon\ge 1.6$.  But in the Kosarak and POS datasets, the performance of SW-W can be as large as $3$ orders of magnitude (measured by MSE of answering random range queries) \revision{.}{ compared to SW.}

In \autoref{fig:vis_ldp}, we visualize the estimated stream of our method \methodldp on the DNS dataset using $\epsilon=1$.  We run algorithms only once to demonstrate real-world usage. Clearly, \methodldp and ground truth are on similar trajectories in the figure which means our method \methodldp can give pretty accurate predictions.


\section{Related Work}
\label{sec:related}
\subsection{Dealing with Streaming Data in DP}
A number of solutions have been proposed for solving the problem of releasing real-time aggregated statistics under differential privacy (DP).  Here, besides the \pak's approach~\cite{ndss:perrier2019private}, we briefly describe several other related works.

\mypara{Event-level DP}
The first line of work is the hierarchical method for handling binary streams, proposed concurrently by Dwork et al.~\cite{stoc:dwork2010differential} and Chan et al.~\cite{tissec:chan2011private}.  Event-level DP is satisfied in this case.
Both works assumed each value in the stream is either $0$ or $1$ and proposed a differentially private continual counter release algorithm over a fixed-length binary stream with a bounded error $O\left(\left(\log^{1.5}n\right)/\epsilon\right)$ at each time step, where $n$ is the stream length. 
In addition, Chan et al.~\cite{tissec:chan2011private} extend the binary tree method to handle unlimited streams.  There is also an \revision{on-line}{online} consistency method: for each time slot, if its estimation is no bigger than the previous one, output the previous one; otherwise, increment the previous one by $1$.  We note that this method handles the binary setting, and thus focuses on ensuring each value is an integer, while our method works on the more general non-binary setting and minimizes the overall noise error.

In a follow-up work by Dwork et al.~\cite{asiacrypt:dwork2015pure}, the authors further assumed the number of $1$'s in the stream is small, and proposed an online partition algorithm to improve over the previous bound.
Chen et al.~\cite{ccs:chen2017pegasus} used the similar idea but worked in a different setting (i.e., each value can be any number instead of a binary number).  The method partitions the stream into a series of intervals so that values inside each interval are ``stable'', and publishes the median of each interval.  
While Chen et al.~\cite{ccs:chen2017pegasus} works in the non-binary setting, similar to our paper and \pak's
, it makes two assumptions that lead to a quite different design.  First, as mentioned above, Chen et al.~\cite{ccs:chen2017pegasus} made the stability assumption so that partitioning the stream gives better utility; second, it works in a different scenario (i.e., each value is composed of multiple users, and each user contribute at most $1$) and the sensitivity is $1$, and therefore, the threshold for truncation is not needed. 



\mypara{User-level DP}
\revision{}{Compared to Event-level DP which protects against the change of one single event, the user-level definition models the change of the whole data possessed by the user. } \revision{There is also work that focuses on providing the notion of user-level DP.}{}  Because the user-level DP is more challenging, proposals under this setting rely more on the auxiliary information.  In particular, Fan et al.~\cite{tkde:fan2013adaptive} proposed to release perturbed statistics at sampled timestamps and uses the Kalman filter to predict the non-sampled values and correct the noisy sampled values.  It takes advantage of the seasonal patterns of the underlying data. 
Another direction is the offline setting, where the server has the global view of all the values first, and then releases a streaming model satisfying DP.  In this setting, Acs and Castelluccia~\cite{kdd:acs2014case} propose an algorithm based on Discrete Fourier Transform (DFT). Rastogi and Nath~\cite{sigmod:rastogi2010differentially} further incorporate sampling, clustering, and smoothing into the process.

\mypara{$w$-event-level DP}
To balance the privacy loss and utility loss between user-level and event-level privacy models, relaxed privacy notions are proposed.  
Bolot et al.~\cite{icdt:bolot2013private} extended the binary tree mechanism on releasing perturbed answers on sliding window sum queries over infinite binary streams with a fixed window size and using a more relaxed privacy concept called decayed privacy.   Kellaris et al.~\cite{pvldb:kellaris2014differentially} propose $w$-event DP and two new privacy budget allocation schemes (i.e., split $\epsilon$ into individual events) to achieve it.  More recently, Wang et al.~\cite{tdsc:wang2016real} work explicitly for spatiotemporal traces, and improve the schemes of Kellaris et al. by adaptively allocating privacy budget based on the similarity of the data sources; Fioretto and Hentenryck~\cite{jair:fioretto2019optstream} improve the schemes by
using a similar approach of Fan et al.~\cite{tkde:fan2013adaptive}: sampling representative data points to add noise, and smoothing to reconstruct the other points.


In our work, we follow the event-level privacy model and dealing with a more extended setting where data points are from a bounded range instead of the binary domain, and publish the stream in an \revision{on-line}{online} manner.  Moreover, we do not rely on any pattern to exist in the data; and we propose methods for both DP and LDP.

\subsection{Dealing with Streams in Local DP}
In the local DP setting, most existing work focus on estimating frequencies of values in the categorical domain~\cite{ccs:ErlingssonPK14,stoc:BassilyS15,uss:WangBLJ17,tiot:YeB18,ndss:wang2020consistent}.  These techniques can be applied to other applications such as heavy hitter identification~\cite{nips:BassilyNST17,tdsc:wang2019locally,icde:WangXYHSSY18} frequent itemset mining~\cite{ccs:qin2016heavy,sp:WangLJ18}, multi-dimension data estimation~\cite{sigmod:WangDZHHLJ19,ccs:ZhangWLHC18,sigmod:CormodeJKLSW18,pvldb:CormodeKS19,pvldb:yang2020answering,pvldb:xu2020collecting}.  In the numerical/ordinal setting, previous work~\cite{focs:DuchiJW13,icde:WangXYZ19} mostly focused on estimating mean.  Recently, Li et al.~\cite{sigmod:li2019estimating} proposed the square wave mechanism for the more general task of estimating the density.  

%
There are two methods~\cite{soda:ErlingssonFMRTT18,nips:JosephRUW18} that deal with streaming data in the user-level LDP, and their data models are different from ours.  In particular, Erlingsson et al.~\cite{soda:ErlingssonFMRTT18} assume the users' values are integers.  Each user's value can change at most a few times, and each change can be either $+1$ or $-1$.  The authors proposed a hierarchical method (similar to Phase II of \pak) to estimate the average change over time and thus get the average value by accumulating the changes.
Joseph et al.~\cite{nips:JosephRUW18} assume at each time, the users' values are a sequence of bits drawn from several Bernoulli distributions, and the goal is to estimate the average of all the bits held by all users (or the average of the Bernoulli parameters).  The authors proposed a method to efficiently control each user's contribution when the server's guess (previous estimation) is accurate, and thus save privacy budget.
The authors extend the method to a categorical setting and use it to find the most frequent value held by the users.

There is also a parallel work~\cite{infocom:wang2020towards} that works on $w$-event, metric-based LDP.  The proposed protocol assumes there is a pattern in each user's streaming data, and lets each user sample the turning point to report.

\section{Conclusion and Discussion}
\label{sec:conc}
We have presented a privacy-preserving algorithm \method to continually output a stream of observations.  \method first finds a threshold to truncate the stream using the exponential mechanism, considering both noise and bias.  
Then \method runs an \revision{on-line}{online} hierarchical structure and adds noise to the stream to satisfy differential privacy (DP).  Finally, the noisy data are smoothed to improve utility.  We also design a mechanism \methodldp that satisfies the local version of DP.  Our mechanisms can be applied to real-world applications where continuous monitoring and reporting of statistics, e.g., smart meter data and taxi fares, are required.  The design of \method and \methodldp are flexible and have the potential to be extended to incorporate more properties of the data.  We list some as follows.


\mypara{Shorten the Holdout of the Stream}
We follow the setting of \pak~\cite{ndss:perrier2019private} and use the first $m$ values to output the threshold $\theta$.  
If we want to start outputting the stream sooner, we can use our Threshold optimizer with only fewer observations to find a rough threshold.  During the process of outputting the stream, we can use sequential composition (in \autoref{subsec:composition}) to fine-tune the threshold.

\mypara{Update $\theta$}
We follow the setting of \pak and assume the distribution stays the same.  If the distribution changes, we can have the Threshold optimizer run multiple times (using either sequential composition to update $\theta$ while simultaneously outputting the stream, or the parallel composition theorem to block some values to update $\theta$).

\mypara{Utilizing Patterns of the Data}
If there is further information, such that the data changes slowly (e.g., the current value and the next one differ in only a small amount), or the data changes regularly (e.g., if the values show some Diurnal patterns), are given, we can potentially utilize that to improve the performance of our method as well.  

\section*{Acknowledgements}
This project was supported by NSF 1931443, 2047476, a Bilsland Dissertation Fellowship, a Packard fellowship, and gift from Cisco and Microsoft.
The authors are thankful to the anonymous reviewers for their supportive reviews.
{
\bibliographystyle{abbrv}
\bibliography{ref,ref_new}
}

\appendix
\section{Supplementary Mechanisms of DP}
\label{app:dp_primitive}
We review the method of smooth sensitivity that \pak~\cite{ndss:perrier2019private} use for estimating the percentile.

\mypara{Smooth Sensitivity}
Rather than the global sensitivity that considers any pair of neighboring sequences, the local sensitivity fixes one dataset $V$ and only considers all possible neighboring sequence $V'$ around $V$.
\[
\mathsf{LS}_{V} (f) =  \max_{V': V'\simeq V} ||f(V) - f(V')||_1.
\]

The advantage of using local sensitivity is that we only need to consider neighbors of $V$ which could result in lower sensitivity of the function $f$, and consequently lower noise added to the true answer $f$. Unfortunately, replacing the global sensitivity with local sensitivity directly (e.g., in the Laplace mechanism) violates DP. This is handled by using smooth sensitivity~\cite{stoc:nissim07} instead. 

For $b > 0$, the $b$-smooth sensitivity of $f$ at $V \in \mathcal{V}$, denoted by $\mathsf{SS}_{V, b}(f)$,  is defined as
\[
\mathsf{SS}_{V, b}(f) = \max_{V'} \left\{ \mathsf{LS}_{V'}(f) \cdot e^{-b \cdot d(V, V')} \right\},
\]
where $d(V, V')$ denotes the hamming distance between $V$ and $V'$.
The method of smooth sensitivity is given below:
$$
\AA(V) =f(V) + \frac{\mathsf{SS}_{V, b}}{a}\cdot Z,
$$
where $Z$ is a random variable from some specified distribution.  To obtain $(\epsilon, 0)$-DP, $Z$ is drawn from the Cauchy distribution with density $\propto \frac{1}{1 + |z|^\gamma}$ for $\gamma>1$.  But to use an exponentially decaying distribution (which gives good accuracy), such as standard Laplace or Gaussian distribution, one can only obtain $(\epsilon, \delta)$-DP.  For both, we set $b \le \frac{\epsilon}{-2\log(\delta)}$ as the smoothing parameter.  If we use the Laplace distribution with scale 1, $a = \frac{\epsilon}{2}$.  When the noise is standard Gaussian, then $a = \frac{\epsilon}{\sqrt{-\ln \delta}}$~\cite{stoc:nissim07}. 



\section{More Details about \pak}
\label{app:imp_ss}
\pak computes the smooth sensitivity of the empirical $p$-quantile, i.e., $\hat{x}_{p}$, as
\begin{align*}
 &\mathsf{SS}_{V, b}(\hat{x}_{p}) = \max_{k = 0,1,\ldots, m + 1}\\
&   \left\{e^{-bk} \cdot \max_{t = 0,1, \ldots, k+1} [ V^s (P+t) - V^s (P+t-k-1)]\right\}.
\end{align*}

Here, $V^s$ is the sorted string of the first $m$ values of $V$ in ascending order, where $V^s(i)=0$ if $i < 1$ and $B$ if $i > m$. And $P$ is the rank of $\hat{x}_{p}$. 
After computing the smooth sensitivity, Nissim et al.\cite{stoc:nissim07} sets the threshold $\theta$ as 
\[ 
\theta = \hat{x}_{p} + \frac{\mathsf{SS}_{V, b}(\hat{x}_{p})}{a} \cdot Z,
\]
where $Z$ is a random variable from some specified distribution in order to satisfy DP.

PAK~\cite{ndss:perrier2019private} propose to bound $\Pr{\theta < x_p}$ to be arbitrarily small (by an arbitrary $\beta$) and thus uses 
\begin{align*}
 \theta & = \hat{x}_{p} + \frac{\kappa \mathsf{SS}_{V, b}(\hat{x}_{p})}{a} \cdot ( Z + G_{\text{ns}}^{-1}(1-\beta) ),
\end{align*}
where $\kappa$ is a positive real number 
$\left(1-\frac{(e^{b}-1)G_{\text{ns}}^{-1}(1-\beta_{\text{lt}})}{a} \right)^{-1}$
and $G_{\text{ns}}$ denotes the CDF of the distribution of $Z$.

The threshold $\theta$ released via the above mechanism is differentially private since $\kappa \mathsf{SS}_{V, b}(\hat{x}_{p})$ is a smooth upper bound of $\hat{x}_{p}$ and $\kappa$ only depends on public parameters. 

In their evaluation, the authors aim to get the $99.5$-percentile and set $p=99.575, \beta=0.3\cdot 0.02$, and $\delta=1/n^2$.

\section{Consistency Algorithm}
\label{app:consist}

\mypara{Off-line Consistency~\cite{pvldb:HayRMS10}}
We use $x$ to denote a node on the hierarchy $H$, and let $\ell(x)$ to be the height of $x$ (the height of a leaf is $1$; and root is of height $h$).  We also denote $\mbox{prt}(x), \mbox{chd}(x)$, and $\mbox{sbl}(x)$ to denote the children, parent, and siblings of $x$, respectively.  
We use $H(x)$ to denote the value corresponding to node $x$ in the hierarchy.  The first step updates the values in $H$ from bottom to top.  
The leaf nodes remain the same; and for each height-$\ell$ node $x$, where $\ell$ iterates from $2$ up to $h$, we update it 
\vspace{-0.028cm}
\begin{align}
H(x) \gets \frac{b^{l} - b^{l-1}}{ b^{l} - 1 } H(x) + \frac{b^{l-1} - 1}{ b^{l} - 1 } \sum_{y \in \mbox{chd}(x)} H(y).
\label{eq:bottom_up}
\end{align}

We then update the values again from top to bottom.
This time the value on root remains the same, and for each height-$\ell$ node $x$, where $\ell$ iterates from $h-1$ down to $1$, we update it
\begin{align}
H(x) \gets \frac{b-1}{b}H(x) + \frac{1}{b} \left(H(\mbox{prt}(x)) - \sum_{y \in \mbox{sbl}(x)} H(y)\right).
\label{eq:top_down}
\end{align}


\mypara{An Online Algorithm}
We decompose the noisy values in $H$ into two parts: the true values and the pure noises,  denoted as $T$ and $N$, respectively.  $N$ is the independent noise from the Laplace mechanism (described in \autoref{subsec:dp_primitive}).  $T$ is defined by induction: for a leaf $x$, $T(x)$ corresponds to one true value $v$, and for a node $x$ in a higher level, $T(x)=\sum_{y \in \mbox{chd}(x)} T(y)$.
The true values from $T$ are consistent naturally.  Thus if $N$ is consistent, $H$ is also consistent.  To see it, consider any internal node $x$, 
\vspace{-0.05cm}
\begin{align*}
  &H(x) = T(x)+N(x) \\
= &\sum_{y\in\mbox{chd}(x)}T(y)+ \sum_{y\in\mbox{chd}(x)}N(y)\\
= &\sum_{y\in\mbox{chd}(x)}(T(y) + N(y)) = \sum_{y\in\mbox{chd}(x)}H(y).
\end{align*}

In the online consistency algorithm, we internally generate the noise hierarchy $N$ and run the steps given in \autoref{eq:bottom_up} and~\autoref{eq:top_down} to make $N$ consistent.
After this pre-processing step, we can ignore the higher-layers of the hierarchy, and use only the leaf nodes which add consistent noise to each individual value.  This is because the results from the higher-layers are already consistent with those from the leaves, thus it suffices to only output the most fine-grained result.



The online consistency algorithm satisfies DP as long as it gives identical output distribution as the offline algorithm~\cite{pvldb:HayRMS10}.  Now we prove this fact.  We first restate the theorem (\autoref{thm:online_consist}):
\begin{theorem}
The online consistency algorithm gives identical results as the off-line consistency algorithm.
\end{theorem}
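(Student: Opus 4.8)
The plan is to exploit two structural facts about the off-line procedure of Hay et al.: that the bottom-up pass of Equation~\eqref{eq:bottom_up} followed by the top-down pass of Equation~\eqref{eq:top_down} is a \emph{linear} map on the vector of node values, and that this composite map acts as the identity on hierarchies that are already consistent. Granting these, the theorem is essentially immediate. Write the noisy hierarchy as $H = T + N$ with $T$ the (naturally consistent) true-value hierarchy and $N$ the independent-noise hierarchy, as in the text. Linearity gives that the off-line output on $H$ equals the off-line output on $T$ plus the off-line output on $N$; the first summand is $T$ itself because $T$ is consistent, and the second is exactly the consistent noise hierarchy $N'$ that the on-line algorithm constructs. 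Hence both algorithms produce the hierarchy $T + N'$, and in particular release the same leaf values.

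For the linearity claim I would observe that every elementary update in Equations~\eqref{eq:bottom_up} and~\eqref{eq:top_down} rewrites $H(x)$ as a fixed linear combination --- with coefficients depending only on the fan-out $b$ and the height $\ell(x)$, never on the data --- of $H(x)$ and of the values at a few neighbouring nodes. Each pass is a composition of such updates performed in a fixed order, hence a linear operator, and so is their composition, which I will call $\mathcal{C}$; therefore $\mathcal{C}(T+N) = \mathcal{C}(T) + \mathcal{C}(N)$. For the fixed-point claim I would induct on the level. In the bottom-up pass the leaves are untouched, and if every node below height $\ell$ has been left unchanged then a height-$\ell$ node $x$ satisfying $H(x) = \sum_{y\in\mbox{chd}(x)}H(y)$ is also left unchanged by its update, since the coefficients $\frac{b^{\ell}-b^{\ell-1}}{b^{\ell}-1}$ and $\frac{b^{\ell-1}-1}{b^{\ell}-1}$ sum to $1$ and multiply $H(x)$ and $\sum_{y\in\mbox{chd}(x)}H(y)$ respectively. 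The top-down pass is handled symmetrically, from the root downward, using that consistency makes $H(\mbox{prt}(x)) - \sum_{y\in\mbox{sbl}(x)}H(y)$ equal to $H(x)$ and that $\frac{b-1}{b}$ and $\frac{1}{b}$ sum to $1$. (Alternatively, one can simply invoke the result of Hay et al. that $\mathcal{C}(H)$ is the $\ell_2$-nearest consistent hierarchy to $H$, hence an idempotent projection, which fixes consistent inputs for free.)

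Finally, assembling the pieces: $T$ is consistent by construction, so $\mathcal{C}(T) = T$; the on-line algorithm sets $N' = \mathcal{C}(N)$; hence the off-line result is $\mathcal{C}(H) = \mathcal{C}(T) + \mathcal{C}(N) = T + N'$, which is consistent and therefore determined on its internal nodes by its leaves --- the very values the on-line algorithm outputs. The step I expect to require the most care is the fixed-point argument, because both passes are performed \emph{in place}: an update at a node reads the already-updated values of its children (bottom-up) or of its parent (top-down). This in-place behaviour affects only \emph{which} linear map $\mathcal{C}$ is; it neither breaks linearity nor invalidates the level-by-level induction, provided one processes the levels in the same order the algorithm does. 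Everything else reduces to routine bookkeeping with the two sets of coefficients.
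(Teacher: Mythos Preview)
Your proof is correct and rests on the same two facts the paper uses: that each update in Equations~\eqref{eq:bottom_up} and~\eqref{eq:top_down} is linear in the node values, and that the consistency of $T$ gives $T(x)=\sum_{y\in\mbox{chd}(x)}T(y)$ and $T(x)=T(\mbox{prt}(x))-\sum_{y\in\mbox{sbl}(x)}T(y)$. The paper carries this out by a direct per-step calculation showing that the off-line update applied to $H=T+N$ equals $T$ plus the same update applied to $N$, whereas you factor the same argument into ``the composite map $\mathcal{C}$ is linear'' and ``$\mathcal{C}$ fixes consistent hierarchies'' --- a cleaner packaging of identical content.
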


\begin{proof}
We first examine the bottom-up update step.  According to \autoref{eq:bottom_up}, the updated $N(x)$ equals to
\[
\frac{b^{l} - b^{l-1}}{ b^{l} - 1 } N(x) + \frac{b^{l-1} - 1}{ b^{l} - 1 } \sum_{y \in \mbox{chd}(x)} N(y).
\]
Adding $T(x)$ to it, we have the updated $H(x)$ equals to 
\begin{align}
&\frac{b^{l} - b^{l-1}}{ b^{l} - 1 } N(x) + \frac{b^{l-1} - 1}{ b^{l} - 1 } \sum_{y \in \mbox{chd}(x)} N(y) + T(x) \nonumber\\
=&\frac{b^{l} - b^{l-1}}{ b^{l} - 1 } (N(x)+T(x)) + \frac{b^{l-1} - 1}{ b^{l} - 1 } \left(\sum_{y \in \mbox{chd}(x)} N(y)+T(x)\right)\nonumber\\
=&\frac{b^{l} - b^{l-1}}{ b^{l} - 1 } (N(x)+T(x)) + \frac{b^{l-1} - 1}{ b^{l} - 1 } \sum_{y \in \mbox{chd}(x)} (N(y)+T(y))\label{eq:proof_bottom_up}\\
=&\frac{b^{l} - b^{l-1}}{ b^{l} - 1 } H(x) + \frac{b^{l-1} - 1}{ b^{l} - 1 } \sum_{y \in \mbox{chd}(x)} H(y).\nonumber
\end{align}
\autoref{eq:proof_bottom_up} is because of the consistency of $T$ so that $T(x)=\sum_{y \in \mbox{chd}(x)} T(y)$.
This gives the same result as if we run the off-line consistency algorithm.  Similarly, during the top-down update step (in \autoref{eq:top_down}), we have
the updated $N(x)$ equals to
\[
\frac{b-1}{b}N(x) + \frac{1}{b} \left(N(\mbox{prt}(x)) - \sum_{y \in \mbox{sbl}(x)} N(y)\right).
\]
Adding $T(x)$ to it, we have the updated $H(x)$ equals to 
\begin{align}
&\frac{b-1}{b}N(x) +\frac{1}{b} \left(N(\mbox{prt}(x)) - \sum_{y \in \mbox{sbl}(x)} N(y)\right) + T(x)\nonumber\\
=&\frac{b-1}{b}(N(x)+T(x)) 
+\frac{1}{b} \left(N(\mbox{prt}(x)) - \sum_{y \in \mbox{sbl}(x)} N(y) + T(x)\right)\nonumber\\
=&\frac{b-1}{b}H(x) + \frac{1}{b} \left(H(\mbox{prt}(x)) - \sum_{y \in \mbox{sbl}(x)} H(y)\right).\label{eq:proof_top_down}
\end{align}
which is the same result as if we run the off-line consistency algorithm.  
\autoref{eq:proof_top_down}  also holds because of the consistency of $T$ so that $T(x)=T(\mbox{prt}(x))-\sum_{y \in \mbox{sbl}(x)} T(y)$.
\end{proof}

\section{Smoothing Methods}
\label{app:smooth}

Denote $u_1, u_2, \ldots$ as the noisy estimates given by the leaves of the perturber (or the $(s+1)$-th levels of the original hierarchy).  Each $u_i$ is the noisy sum of $b^s$ values.  Let $u_0=\frac{1}{2}b^s\theta$ (initially there are no estimations from the hierarchy; we thus use half of the threshold as mean).  
The smoother will take the sequence of $u$ and output the final result $\tilde{v}_i$ for each input value.  
Let $t=\lceil i/b^s\rceil$, we consider several functions:

\begin{enumerate}[leftmargin=*]
    \item Recent smoother: $\tilde{v}_i=u_{t}/b^s$.  It takes the mean of the most recent output from the perturber.
    \item Mean smoother: $\tilde{v}_i=\frac{1}{b^s t}\sum_{j=0}^{t}u_{j}$.  It takes the mean of the output from the perturber up until the moment.
    \item Median smoother: $\tilde{v}_i=\mbox{median}(u_1,\ldots,u_{t})$. Similar to the mean smoother, the median smoother takes the median of the output from the perturber up until the moment.
    \item Moving average smoother: $\tilde{v}_i=\frac{1}{b^s w}\sum_{j=t+1-w}^{t}u_{j}$.  Similar to the mean smoother, it takes the mean over the most recent $w$ outputs from the perturber.  When $t+1<w$, we use the average of the first $t+1$ values of $u$ divided by $b^s$ as $\tilde{v_i}$.
    \item Exponential smoother: $\tilde{v}_i=\frac{u_0}{b^s}$ if $t=0$, and $\tilde{v}_i = \alpha \frac{u_t}{b^s}+(1-\alpha)\tilde{v}_{i-b^s}$ if $t>0$, where $0\le \alpha \le 1$ is the smoothing parameter.  The exponential smoother put more weight on the more recent values from the hierarchy.
\end{enumerate}

\section{Mechanisms of Local Differential Privacy}
\label{app:ldp_primitive}
In this subsection, we review the primitives proposed for LDP.  We use $v$ to denote the user's private value, and $y$ as the user's report that satisfies LDP.
In this section, following the notations in the LDP literature, we use $p$ and $q$ to denote probabilities.

\subsection{Square Wave Mechanism for Density Estimation}
\label{app:ldp_fo}


Li et al.~\cite{sigmod:li2019estimating} propose an LDP method that can give the full density estimation.  
The intuition behind this approach is to try to increase the probability that a noisy reported value carries meaningful information about the input.  
Intuitively, if the reported value is the true value, then the report is a ``useful signal'', as it conveys the extract correct information about the true input.  If the reported value is not the true value, the report is in some sense noise that needs to be removed.  
Exploiting the ordinal nature of the domain, a report that is different from but close to the true value $v$ also carries useful information about the distribution.  
Therefore, given input $v$, we can report values closer to $v$ 
with a higher probability than values that are farther away from $v$.  
The reporting probability looks like a squared wave, so the authors call the method Square Wave method (SW for short).

Without loss of generality, we assume values are in the domain of $[0,1]$.  
To handle an arbitrary range $[\ell, r]$, each user first transforms $v$ into $\frac{v-\ell}{r-\ell}$ (mapping $[\ell, r]$ to $[0,1]$); and the estimated result is transformed back.  Define the ``closeness'' measure
$b = \frac{\epsilon e^\epsilon - e^\epsilon +1}{2e^\epsilon(e^\epsilon - 1 - \epsilon)}$,
the Square Wave mechanism SW is defined as: 
\begin{align}
\forall y \in [-b,1+b],  \;\Pr{\mbox{SW}(v)=y}  \!=\! \left\{
\begin{array}{lr}
\!p, & \mbox{if} \; |v-y|\le b  \ , \\
\!q, & \mbox{ otherwise} \ . \\
\end{array}\nonumber
\right.
\end{align}

By maximizing the difference between $p$ and $q$ while satisfying the total probability adds up to $1$, we can derive $p=\frac{e^\epsilon}{2be^\epsilon + 1}$ and $q=\frac{1}{2b e^\epsilon + 1}$.

After receiving perturbed reports from all users, the server runs the Expectation Maximization algorithm to find an estimated density distribution that maximizes the expectation of observing the output.  Additionally, the server applies a special smoothing requirement to the Expectation Maximization algorithm to avoid overfitting.

\subsection{Candidate Methods for the Perturber}
\label{app:ldp_mean}
As we are essentially interested in estimating the sum over time, the following methods that estimate the mean within a population are useful.  We first describe two basic methods.  Then we describe a method that adaptively uses these two to get better accuracy in all cases.  Our perturber will use the final method.

\begin{figure*}[h]
	\centering
	\subfigure[Fare, MAE]{
		\includegraphics[width=0.23\textwidth]{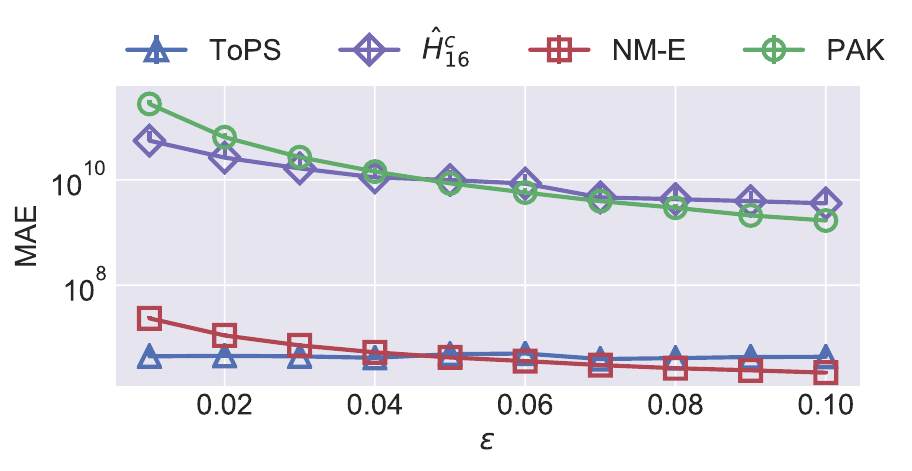}
	}
	\subfigure[DNS, MAE]{
		\includegraphics[width=0.23\textwidth]{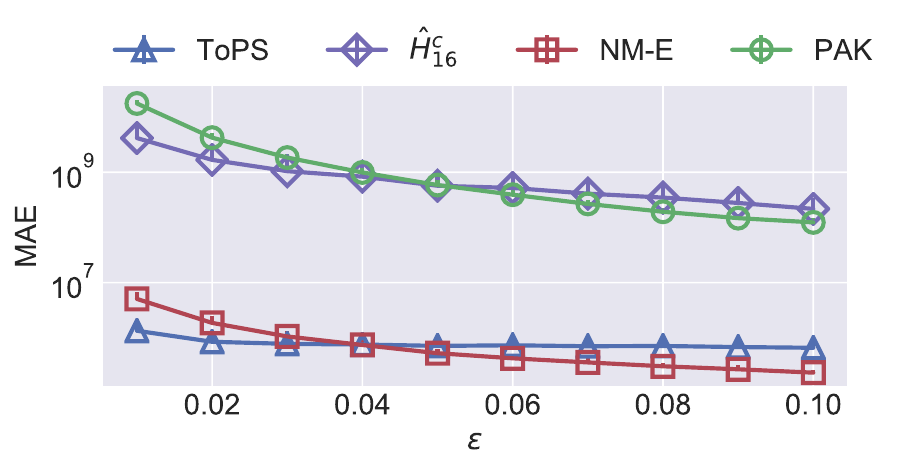}
	}
	\subfigure[Kosarak, MAE]{
		\includegraphics[width=0.23\textwidth]{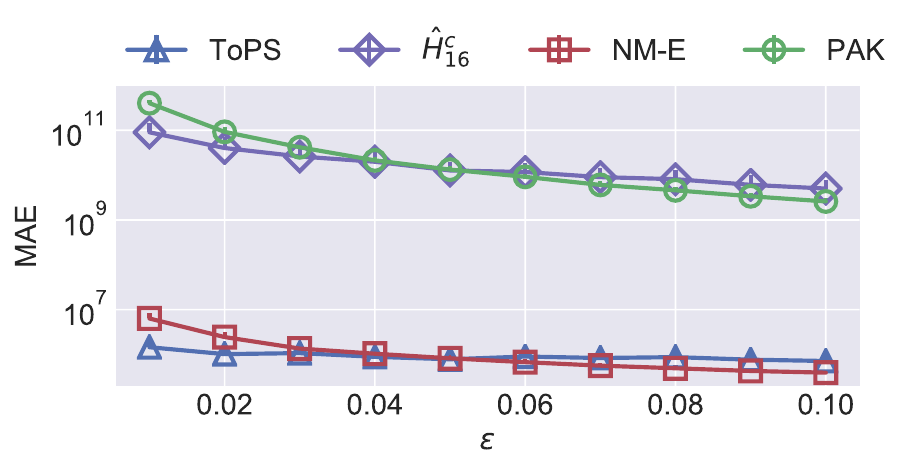}
	}
	\subfigure[POS, MAE]{
		\includegraphics[width=0.23\textwidth]{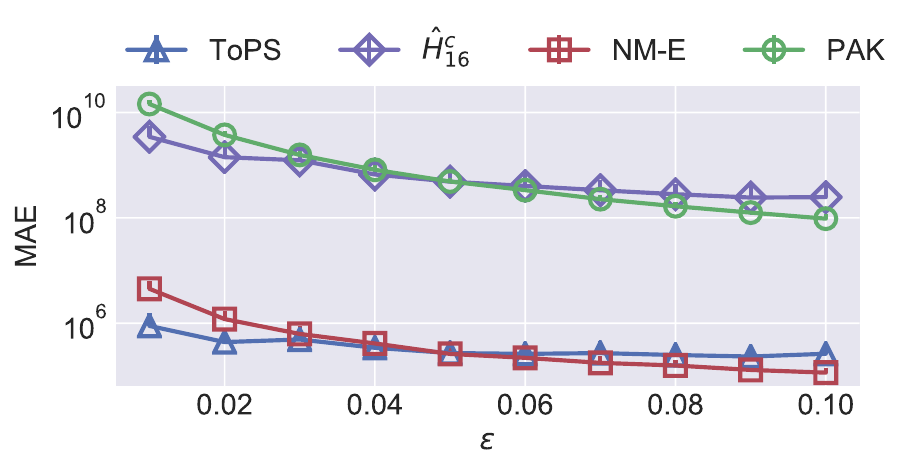}
	}
	\subfigure[Fare, MMAE]{
		\includegraphics[width=0.23\textwidth]{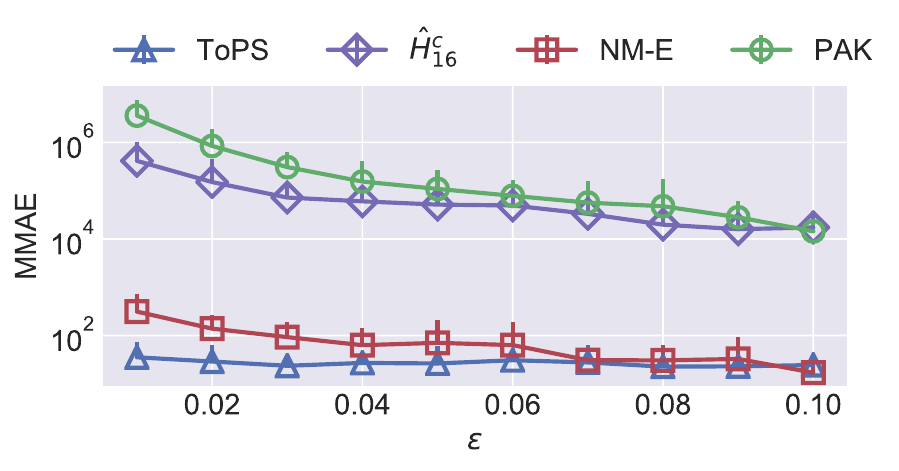}
	}
	\subfigure[DNS, MMAE]{
		\includegraphics[width=0.23\textwidth]{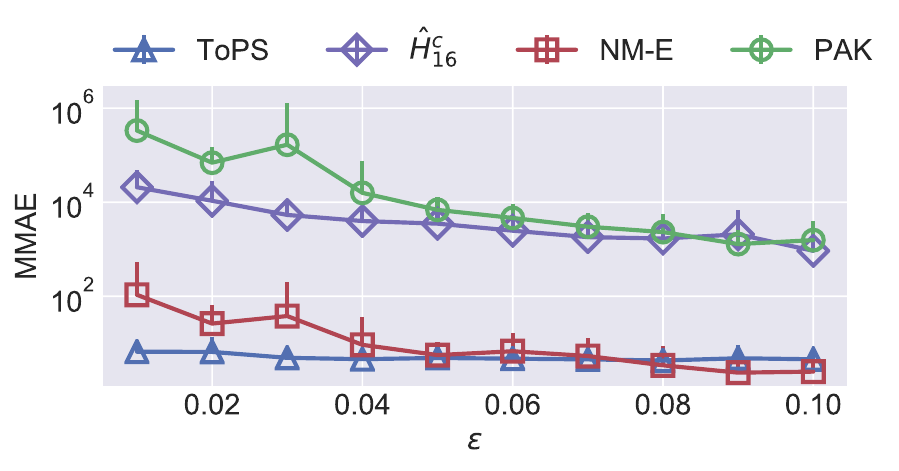}
	}
	\subfigure[Kosarak, MMAE]{
		\includegraphics[width=0.23\textwidth]{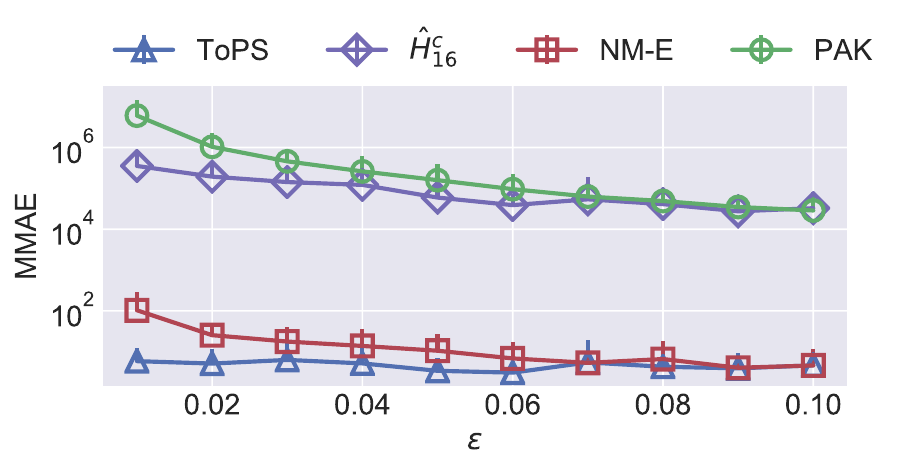}
	}
	\subfigure[POS, MMAE]{
		\includegraphics[width=0.23\textwidth]{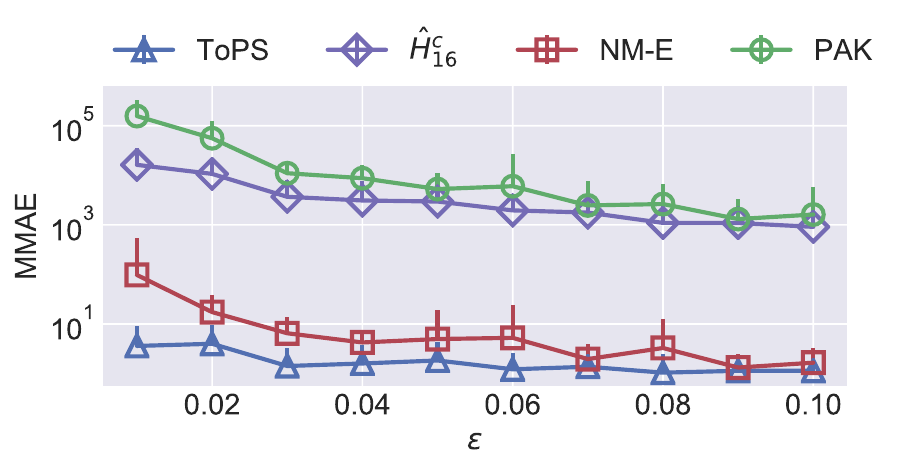}
	}
	\subfigure[Fare, MMSE]{
		\includegraphics[width=0.23\textwidth]{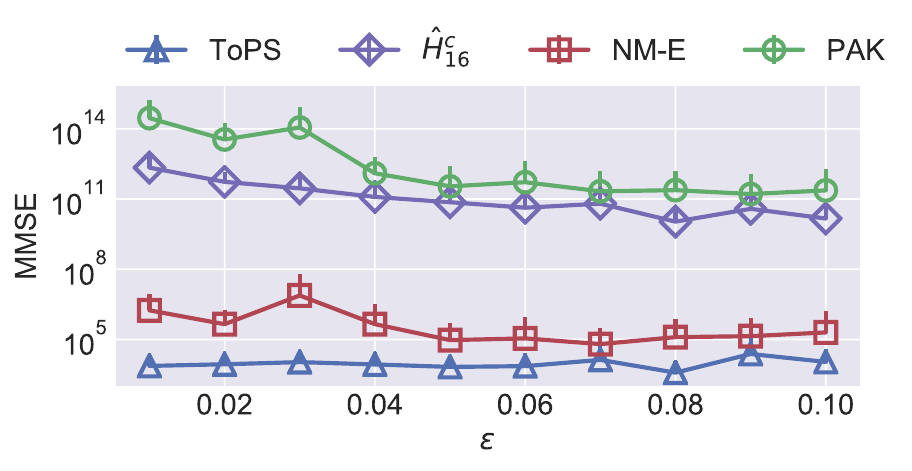}
	}
	\subfigure[DNS, MMSE]{
		\includegraphics[width=0.23\textwidth]{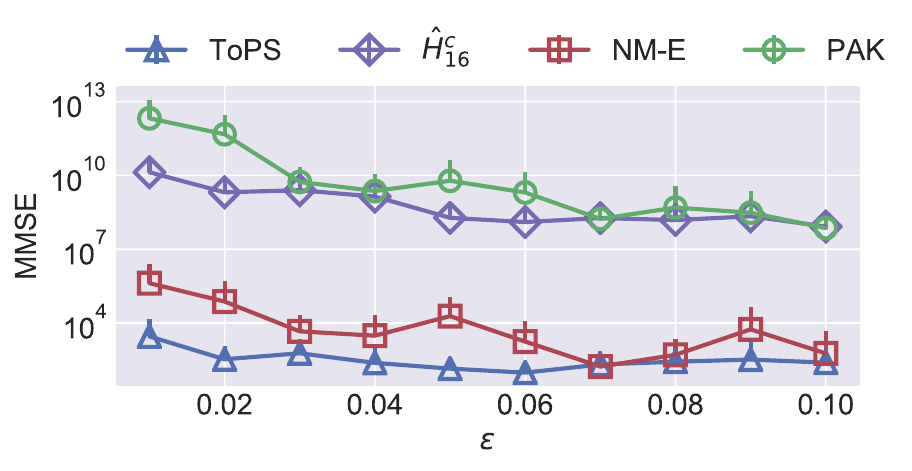}
	}
	\subfigure[Kosarak, MMSE]{
		\includegraphics[width=0.23\textwidth]{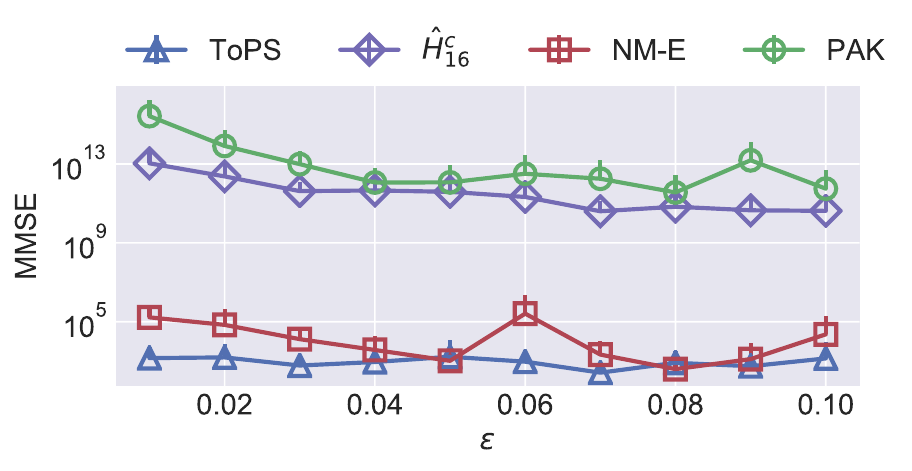}
	}
	\subfigure[POS, MMSE]{
		\includegraphics[width=0.23\textwidth]{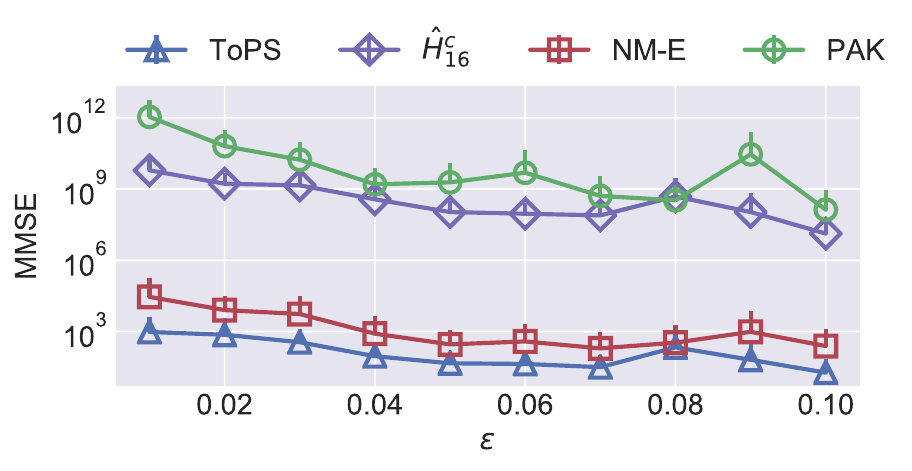}
	}
	\caption{Comparison of different methods on MAE (first row), MMAE (second row) and MMSE (third row).}
	\label{fig:query_more_vary_e_med}
\end{figure*}

\mypara{Stochastic Rounding}  
This method uses stochastic rounding to estimates the mean of a continuous/ordinal domain~\cite{jasa:DuchiJW18}.  We call it Stochastic Rounding (SR for short).  Assume the private input value $v$ is in the range of $[-1, 1]$ (otherwise, we can first projected the domain into $[-1,1]$), the main idea is to round $v$ to $v'$ so that $v'=1$ with probability $p_1=\frac{1}{2}+\frac{v}{2}$ and $v'=-1$ w/p $1 - p_1$.
This stochastic rounding step is unbiased in that $\EV{v'}=v$.
Then given a value $v' \in \{-1, 1\}$, the method runs binary random response to perturb $v'$ into $y$.  In particular, let $p=\frac{e^\epsilon}{e^\epsilon+1}$ and $q=1-p=\frac{1}{e^\epsilon+1}$, $y=v'$ w/p $p$, and $y\neq v'$ w/p $q$.  
The method has variance $\left(\frac{e^\epsilon+1}{e^\epsilon-1}\right)^2-v^2$.

\mypara{Piecewise Mechanism}
Wang et al.~\cite{icde:WangXYZ19} proposed piecewise mechanism.
It is also used for mean estimation, but can get more accurate mean estimation than SR when $\epsilon> 1.29$.
In this method, the input domain is $[-1,1]$, and the output domain is $[-s, s]$, where $s=\frac{e^{\epsilon/2}+1}{e^{\epsilon/2}-1}$.  For each $v \in [-1,1]$, there is an associated range $[\ell(v),r(v)]$ where $\ell(v)=\frac{\sqrtee\cdot v - 1}{\sqrtee - 1}$ and $r(v)=\frac{\sqrtee\cdot v + 1}{\sqrtee - 1}$, such that with input $v$, a value in the range $[\ell(v),r(v)]$ will be reported with higher probability than a value outside the range.  
The high-probability range looks like a ``piece'' above the true value, so the authors call the method Piecewise Mechanism  (PM for short).
The perturbation function is defined as
\begin{align}
\forall_{y \in [-s,s]}\;\Pr{\mbox{PM}(v) = y}  = \left\{
\begin{array}{lr}
p=\frac{\sqrtee}{2}z , & \mbox{if} \; y \in [\ell(v), r(v)]  \\
q= \frac{1}{2\sqrtee}z , & \mbox{otherwise}. \\
\end{array}\nonumber
\right.
\end{align}
where $z= \frac{\sqrtee - 1}{\sqrtee + 1}$.
Compared to SR, this method has a variance of $\frac{v^2}{\sqrtee-1}+\frac{\sqrtee+3}{3(\sqrtee-1)^2}$~\cite{icde:WangXYZ19}.

\mypara{Hybrid Mechanism}
Both SR and PM incurs a variance that depends on the true value, but in the opposite direction.  In particular, when $v=\pm 1$, the variance of SR is lowest, but the variance of PM is highest. Wang et al.~\cite{icde:WangXYZ19} thus propose a method called Hybrid Mechanism (HM for short) to achieve good accuracy for any $v$.  In particular, define $\alpha=1-e^{-\epsilon/2}$, when $\epsilon>0.61$, users use PM w/p $\alpha$ and SR w/p $1-\alpha$.  When $\epsilon\le 0.61$, only SR will be called.  It is proved by Wang et al.~\cite{icde:WangXYZ19} HM gives better accuracy than SR and PM.  In particular, the worst-case variance is
\begin{align}
\Var{\tilde{v}}  = \left\{
\begin{array}{lr}
\left(\frac{e^\epsilon+1}{e^\epsilon-1}\right)^2 , & \mbox{when} \; \epsilon \le 0.61 \\
\frac{1}{\sqrtee}\left[\left(\frac{e^\epsilon+1}{e^\epsilon-1}\right)^2 + \frac{\sqrtee + 3}{3(\sqrtee-1)}\right], & \mbox{when} \; \epsilon > 0.61. \\
\end{array}
\label{eq:var_hm}
\right.
\end{align}

\section{More Results}
\label{app:more_results}
Denote $Q$ as the set of the $200$ randomly generated queries, we show the results of Mean Absolute Error (MAE):
\begin{align}
    \mbox{MAE}(Q) = \frac{1}{|Q|}\sum_{(i,j)\in Q}{\left|\tilde{V}(i,j) - V(i,j)\right|}.
    \nonumber
\end{align}
Moreover, we also show results for the normalized results, or the mean of range queries, namely,
\begin{align}
    \mbox{MMSE}(Q) = \frac{1}{|Q|}\sum_{(i,j)\in Q}{\left[\frac{\tilde{V}(i,j)}{j-i} - \frac{V(i,j)}{j-i}\right]^2},
    \nonumber\\
    \mbox{MMAE}(Q) = \frac{1}{|Q|}\sum_{(i,j)\in Q}{\left|\frac{\tilde{V}(i,j)}{j-i} - \frac{V(i,j)}{j-i}\right|}.
    \nonumber
\end{align}

\autoref{fig:query_more_vary_e_med} gives results on these metrics.  Let us first look at the first row, which gives MAE results.  The overall trend is similar to that of MSE (\autoref{fig:query_mse_vary_e_med}).  One notable difference is that the better hierarchical method does not give a better overall result (ToPS versus NM-E and $\hat{H}^c_{16}$ versus PAK).  This is because the better hierarchical method (especially the consistency step) is optimizing the squared error.  We note that Lee et al.~\cite{kdd:LeeWK15} proposed methods for optimizing absolute error ($L_1$ error), and they can be used in our setting if the target is to minimize absolute errors.

For the second and third row of \autoref{fig:query_more_vary_e_med}, which corresponds to MMAE and MMSE, respectively, we can see the overall trend and the relative performance of different methods are similar to the case of MAE and MSE.  The results are less stable though: this is because the range of the query here introduces another factor of randomness.  That is, due to the usage of the hierarchy, the larger the range, the smaller the error will be.  
For the MMAE metric, the results of our proposed \method can be as small as 1, while the existing work of PAK~\cite{ndss:perrier2019private} gives errors of $10^3$ to $10^4$, depending on the dataset.  Comparing with the flat method, where a Laplace noise on the order of $B/\epsilon$ (where $B$ is the upper bound of the data) or $\theta/\epsilon$ (if some technique of finding $\theta$ is applied) is added to each count, \method significantly improves over it.  Finally, the results of MMSE show similar trends as that of MSE.  These evaluation results further confirm the superiority of \method regardless of the evaluation metrics.

\end{document}